\def\showauthornotes{0}
\def\showdraftbox{0}
\newcommand{\marginlabel}[1]%
{\mbox{}\marginpar{\it{\raggedleft\hspace{0pt}#1}}}
\newcommand{\ceil}[1]{\left\lceil\, {#1}\,\right\rceil}
\definecolor{Mygray}{gray}{0.8}
\let\csname ifcommentflag\expandafter\endcsname
\newcommand{\Authornote}[2]{{\sf\small\color{red}{[#1: #2]}}}
\newcommand{\Authoredit}[2]{{\sf\small\color{red}{[#1]}\color{blue}{#2}}}
\newcommand{\Authorcomment}[2]{{\sf \small\color{gray}{[#1: #2]}}}
\newcommand{\Authorfnote}[2]{\footnote{\color{red}{#1: #2}}}
\newcommand{\Authorfixme}[1]{\Authornote{#1}{\textbf{??}}}
\newcommand{\Authormarginmark}[1]{\marginpar{\textcolor{red}{\fbox{
#1:!}}}}
\newcommand{\Authornote}[2]{}
\newcommand{\Authoredit}[2]{}
\newcommand{\Authorcomment}[2]{}
\newcommand{\Authorfnote}[2]{}
\newcommand{\Authorfixme}[1]{}
\newcommand{\Authormarginmark}[1]{}
\newcommand\calF{\mathcal{F}}
\newcommand\calH{\mathcal{H}}
\newlength{\pgmtab}  
 \newenvironment{program}{%
\begin{tabbing}\hspace{0em}\=\hspace{0em}\=%
\hspace{\pgmtab}\=\hspace{\pgmtab}\=\hspace{\pgmtab}\=\hspace{\pgmtab}\=%
\hspace{\pgmtab}\=\hspace{\pgmtab}\=\hspace{\pgmtab}\=\hspace{\pgmtab}\=%
\+\+\kill}{\end{tabbing}}
\newcommand {\ELSE}{{\bf else\ }}
\newcommand {\IF}{{\bf if\ }}
\newcommand {\FOR}{{\bf for\ }}
\newcommand {\THEN}{\mbox{\bf then\ }}
\newtheorem{Thm}{Theorem}[section]
\newtheorem{Lem}[Thm]{Lemma}
\newtheorem{Def}[Thm]{Definition}
\newtheorem{Obs}[Thm]{Observation}
\def\FullBox{\hbox{\vrule width 6pt height 6pt depth 0pt}}
\def\qed{\ifmmode\qquad\FullBox\else{\unskip\nobreak\hfil
\penalty50\hskip1em\null\nobreak\hfil\FullBox
\parfillskip=0pt\finalhyphendemerits=0\endgraf}\fi}
\def\qedsketch{\ifmmode\Box\else{\unskip\nobreak\hfil
\penalty50\hskip1em\null\nobreak\hfil$\Box$
\parfillskip=0pt\finalhyphendemerits=0\endgraf}\fi}
\newenvironment{proof}{\begin{trivlist} \item {\bf Proof:~~}}
   {\qed\end{trivlist}}
\newcounter{lecnum}
\newlength{\tpush}
\newcommand{\eps}{\varepsilon}
\newcommand{\etal}{{\em et al.\ }}
\newcommand\calI{\mathcal{I}}
\newcommand\calV{\mathcal{V}}
\newcommand{\hvc}{\mbox{$k$-{\sc HypVC}}\xspace}
\newcommand{\hvcpartite}{\mbox{$k$-{\sc Hyp\-VC\--Par\-tite}}\xspace}
\newcommand{\VC}{\mbox{\sc{vc}}}
\newcommand{\LP}{\mbox{\sc{lp}}}
\title{Nearly Optimal NP-Hardness of Vertex Cover on
  $k$-Uniform $k$-Partite Hypergraphs
}
\author{Sushant Sachdeva
\thanks{
Department of Computer Science, Princeton University. \texttt{sachdeva@cs.princeton.edu} }
\and 
Rishi Saket \thanks{
Department of Computer Science, Princeton University. \texttt{rsaket@cs.princeton.edu}}
}
\begin{document}

\maketitle


\begin{abstract}
  We study the problem of computing the minimum vertex cover on
  $k$-uniform $k$-partite hypergraphs when the $k$-partition is given.
  On bipartite graphs ($k=2$), the minimum vertex cover can be
  computed in polynomial time.  For general $k$, the problem was
  studied by Lov\'{a}sz~\cite{Lovasz75}, who gave a
  $\frac{k}{2}$-approximation based on the standard LP
  relaxation. Subsequent work by Aharoni, Holzman and
  Krivelevich~\cite{AHK96} showed a tight integrality gap of
  $\left(\frac{k}{2} - o(1)\right)$ for the LP relaxation. 
  While this problem was
  known to be NP-hard for $k\geq 3$, the first non-trivial NP-hardness
  of approximation factor of $\frac{k}{4}-\eps$ was shown in a recent
  work by Guruswami and Saket~\cite{GS10}. They also showed that
  assuming Khot's Unique Games Conjecture yields a $\frac{k}{2}-\eps$
  inapproximability for this problem, implying the optimality of
  Lov\'{a}sz's result.

In this work, we 
show that this problem is NP-hard to approximate within 
$\frac{k}{2}-1+\frac{1}{2k}-\eps$. This hardness factor 
is off from the optimal
by an additive constant of at most $1$ for $k\geq
4$. Our reduction relies on the \emph{Multi-Layered PCP} of
\cite{DGKR03} and uses a gadget -- based on
biased Long Codes -- adapted from the LP integrality gap of \cite{AHK96}. 
The nature of our reduction
requires the analysis of several Long Codes with different biases, for
which we prove structural properties of the so called
\emph{cross-intersecting} collections of set families -- 
variants of which have
been studied in extremal set theory.    

\end{abstract}

\section{Introduction}
A $k$-uniform hypergraph $G=(V,E)$ consists of a set of vertices $V$
and a collection of hyperedges $E$ such that each hyperedge contains
exactly $k$ vertices. A vertex cover for $G$ is a subset of vertices
$\calV \subseteq V$ such that every hyperedge $e$ contains at least
one vertex from $\calV$ i.e. $e \cap \calV \neq
\emptyset$. Equivalently, a vertex cover is a hitting set for the
collection of hyperedges $E$. The complement of a vertex cover is
called an \emph{Independent Set}, which is a subset of
vertices $\calI$ such that no hyperedge $e \in E$ is contained inside
$\calI$ i.e. $e \nsubseteq \calI$.

The \hvc  problem is to compute the minimum vertex cover in a
$k$-uniform hypergraph $G$. It is an extremely well studied
combinatorial optimization problem, especially on graphs ($k=2$), 
and is known to be NP-hard. Indeed, the minimum vertex cover problem on graphs 
was
one of Karp's original 21 NP-complete problems~\cite{Karp72}. On the other
hand, the simple greedy algorithm that picks a maximal collection of
disjoint hyperedges and includes all vertices in the edges in the
vertex cover gives a $k$-approximation, which is also obtained by the
standard LP relaxation of the problem. The best algorithms known
today achieve only a marginally better approximation factor of 
$(1-o(1))k$~\cite{Karakostas05,Halperin02}. 

On the intractability side, there have been several results. For the
case $k=2$, Dinur and Safra~\cite{DS02} obtained an NP-hardness of 
approximation factor of
1.36, improving on a $\frac{7}{6}-\eps$ hardness by
H\r{a}stad~\cite{Hastad01}. For general $k$ a sequence of successive
works yielded improved NP-hardness 
factors: $\Omega(k^{1/19})$ by 
Trevisan~\cite{Trevisan01};
$\Omega(k^{1-\eps})$ by Holmerin~\cite{Holmerin02}; $k-3-\eps$ by Dinur,
Guruswami and Khot~\cite{DGK02}; and the currently best
$k-1-\epsilon$ due to Dinur, Guruswami, Khot and
Regev~\cite{DGKR03}. In \cite{DGKR03}, the authors build upon \cite{DGK02}
and the work of Dinur and Safra \cite{DS02}. 
Moreover, assuming Khot's
Unique Games Conjecture (UGC)~\cite{Khot02}, Khot and Regev~\cite{KR08} 
showed an essentially
optimal $k-\eps$ inapproximability. This result was further
strengthened in different directions by Austrin, Khot and
Safra~\cite{AKS09}
and by Bansal and Khot~\cite{BK10}.

\smallskip
{\it \large Vertex Cover on $k$-uniform $k$-partite
Hypergraphs}

In this paper we study the minimum vertex problem on $k$-partite $k$-uniform
hypergraphs, when the underlying partition is given. We denote this
problem as \hvcpartite. This is an interesting
problem in itself and its variants have been studied for
applications related to databases such as distributed data
mining~\cite{FOMPT03}, schema mapping discovery~\cite{GS09} and 
optimization of finite automata~\cite{ISY05}. On bipartite graphs ($k=2$), by 
K\"{o}enig's Theorem computing the minimum vertex cover 
is equivalent to computing the maximum matching which can be done
efficiently.  
For general $k$, the problem was studied 
by Lov\'{a}sz who, in his doctoral thesis~\cite{Lovasz75}, 
proved the following upper bound. 
\begin{Thm}[Lov\'{a}sz~\cite{Lovasz75}]
For every $k$-partite $k$-uniform hypergraph $G$:
$\VC(G)/\LP(G) \le k/2$,
where $\VC(G)$ denotes the size of the minimum vertex cover and $\LP(G)$
denotes the value of the standard LP relaxation. This yields an
efficient $k/2$ approximation for $\hvcpartite$. 
\end{Thm}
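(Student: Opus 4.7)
The plan is to bound the integrality gap $\VC(G)/\LP(G) \le k/2$ by LP rounding, which simultaneously gives the efficient $k/2$-approximation. I would proceed by induction on $k$. The base case $k=2$ is K\"{o}enig's theorem: for bipartite graphs the vertex-cover LP is totally unimodular, so $\VC(G) = \LP(G) \le (k/2)\LP(G)$ trivially.

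For the inductive step with $k \ge 3$, fix an optimal fractional cover $x^*$ with $\sum_v x^*_v = \LP(G)$, and set $s_i = \sum_{v \in V_i} x^*_v$ so that $\sum_i s_i = \LP(G)$. Pick the part $V_i$ minimizing $s_i$ (yielding $s_i \le \LP(G)/k$) and a threshold $t \in (0,1)$; place in the integer cover $C$ the heavy set $H = \{v \in V_i : x^*_v \ge t\}$, which has $|H| \le s_i/t$. Any hyperedge not yet covered has its $V_i$-endpoint of value $<t$; projecting such edges onto $V \setminus V_i$ gives a $(k{-}1)$-partite $(k{-}1)$-uniform hypergraph $G'$, and $(1-t)^{-1}\, x^*|_{V \setminus V_i}$ is a feasible fractional vertex cover of $G'$ (since for each surviving edge $\sum_{j \ne i} x^*_{v_j} \ge 1 - x^*_{v_i} > 1-t$). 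Invoking the inductive hypothesis on $G'$ produces an integer cover of size at most $\frac{k-1}{2(1-t)}(\LP(G)-s_i)$, giving $|C| \le s_i/t + \frac{k-1}{2(1-t)}(\LP(G)-s_i)$.

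The main obstacle is tuning $t$ (and perhaps the choice of part) so that the resulting bound closes to exactly $(k/2)\LP(G)$. A direct calculation shows that even using $s_i \le \LP(G)/k$ and the minimizing $t$, the single-part peel bound generally exceeds $(k/2)\LP(G)$ for small $k$ (for $k=3$ it gives roughly $2.5\,\LP(G)$ rather than $1.5\,\LP(G)$), so the proof cannot be this crude. I see three natural refinements to attempt: (i) take a weighted average of peeling different parts, exploiting that $\sum_i s_i = \LP(G)$ forces at least one balanced choice of $(i,t)$; (ii) use LP duality, writing $\LP(G) = \nu^*(G)$ for the maximum fractional matching, and reducing the problem to showing $\nu(H) \le (k/2)\nu^*(G)$ for $H$ the bipartite projection of $G$ onto any two parts, where K\"{o}enig's theorem converts $\tau(H)$ into $\nu(H)$; or (iii) a polyhedral argument characterizing the extreme points of the vertex-cover polytope for $k$-partite $k$-uniform hypergraphs and rounding them directly. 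The tightness of the $k/2$ constant, matched by the Aharoni--Holzman--Krivelevich construction, indicates that the proof must delicately exploit the $k$-partite structure, and the sharp constant will be sensitive to the details of whichever of these refinements is carried out.
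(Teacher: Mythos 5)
The paper does not prove this theorem; it is cited from Lov\'asz's thesis~\cite{Lovasz75} and used as a black box, so there is no in-paper proof to compare against. Judged on its own, your proposal is an honest attempt that you yourself correctly diagnose as incomplete, and the diagnosis is accurate: the single-part peeling recursion genuinely fails to close.

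Concretely, the bound $|C| \le s_i/t + \tfrac{k-1}{2(1-t)}(\LP(G)-s_i)$ with $s_i \le \LP(G)/k$, optimized over $t$, gives $\tfrac{1}{k}\bigl(1+\sqrt{\tfrac{k-1}{2}(k-1)}\bigr)^2 \cdot \LP(G)$ asymptotically of order $\tfrac{k}{2}\LP(G)\cdot\bigl(1+o(1)\bigr)$ but with a multiplicative slack that is bounded away from $1$ for every fixed $k$; for $k=3$ the optimum is $\approx 1.94\,\LP(G)$ rather than $1.5\,\LP(G)$. The loss comes from the $\tfrac{1}{1-t}$ rescaling of the residual fractional cover: peeling a single part both costs $s_i/t$ and inflates the downstream LP value, and these two costs cannot be simultaneously controlled tightly enough by choosing $t$. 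Your refinement (i) (averaging over which part to peel) does not fix this, because the inflation factor $\tfrac{1}{1-t}$ compounds multiplicatively over $k-2$ levels of the induction; no choice of per-level $t$'s makes the product come out to exactly $k/2$. Of your three suggestions, only (ii) points in the right direction. The actual argument does not induct on $k$ at all. Its engine is the elementary observation that if $k$ nonnegative reals sum to at least $1$, then the two largest sum to at least $2/k$; this lets one associate to each hyperedge a single \emph{pair} of parts carrying fractional mass $\ge 2/k$, reduce covering that hyperedge to covering a bipartite ``shadow'' edge, and invoke K\"onig's theorem on the resulting bipartite graphs. The subtlety you would need to resolve (and which your sketch does not address) is that different hyperedges select different pairs of parts, so the shadow is a union of bipartite graphs rather than one bipartite graph, and one must combine the K\"onig covers across pairs without paying a $\binom{k}{2}$ blowup. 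That bookkeeping, not the $k=2$ base case, is where the work lies, and it is resolved by an LP-duality / fractional-matching argument rather than by a cover-side peeling induction. See F\"uredi's survey on matchings and covers in hypergraphs for a self-contained account.
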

The above upper bound was shown to be tight
by Aharoni, Holzman and Krivelevich~\cite{AHK96} who proved the
following theorem.
\begin{Thm}[Aharoni et al.\cite{AHK96}]
For every $k \ge 3$, there exists a family of 
$k$-partite $k$-uniform hypergraphs $G$
such that $\VC(G)/\LP(G) \ge k/2-o(1)$. Thus, the integrality gap of
the standard LP relaxation is $k/2 - o(1)$.
\end{Thm}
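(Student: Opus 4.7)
The plan is to construct, for every $k \geq 3$ and $\eps > 0$, a $k$-partite $k$-uniform hypergraph $G$ satisfying $\VC(G)/\LP(G) \geq k/2 - \eps$. Since any single part is already a vertex cover, $\VC(G) \leq n$ (where $n$ is the common part size); combined with Lov\'{a}sz's bound $\LP(G) \geq 2\,\VC(G)/k$, the target ratio forces both bounds to be essentially tight simultaneously, i.e., $\VC(G) = (1 - o(1))n$ and $\LP(G) = (2/k + o(1))n$. The construction must therefore be dense enough that no integer cover significantly smaller than a full part suffices, yet ``spread out'' enough to admit a small fractional cover.

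For the construction I would take $V_1, \ldots, V_k$ each a copy of $[n]$ (or a more structured index set) and include hyperedges from a symmetric, highly regular combinatorial design: for instance, a Steiner-like system, a suitably-tuned random $k$-partite $k$-uniform hypergraph with edge density $p = p(n,k)$, or an iterated tensor or substitution of known integrality-gap instances. Symmetry across the $k$ parts is important both for ease of analysis and so that any part plays the role of the ``integral tight'' cover.

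For the LP upper bound $\LP(G) \leq (2/k + o(1))n$ I would exhibit an explicit feasible fractional solution exploiting the design's symmetry: either a weighted symmetric solution concentrating little weight on each vertex, or a weighted ``selector'' of size roughly $2n/k$ with weight $1/2$, whose feasibility follows from the pairwise incidence properties of the design. For random constructions, the LP dual (fractional matching) together with symmetry gives a matching upper bound in expectation, which concentrates.

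The main technical obstacle is the integer lower bound $\VC(G) \geq (1-o(1))n$, equivalently, that any independent set spanning all $k$ parts has size at most $(k-1)n + o(n)$. I would attack this via extremal combinatorics: first reduce to structured independent sets via shifting or compression, then apply a density-increment argument or a Kruskal--Katona-type shadow estimate to derive a contradiction with the design's density. For a random construction, the analogous step is a union bound over candidate large independent sets paired with Chernoff concentration on the number of hyperedges they avoid, chosen so that $p$ is large enough to kill all such candidates but small enough to keep the fractional cover at $\approx 2n/k$. The delicate point in either route is ruling out independent sets that genuinely spread across all $k$ parts (as opposed to concentrating in $k-1$ parts, which gives only the trivial bound), and this is where the combinatorial structure of the design carries the full weight of the argument.
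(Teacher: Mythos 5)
Your high-level framing (want $\VC$ large, $\LP$ small, gap $\approx k/2$) is reasonable, but the approach diverges substantially from the paper's and contains a flaw that, I think, would sink the plan as stated.

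First, the inference that the target ratio \emph{forces} $\VC(G) = (1-o(1))n$ and $\LP(G) = (2/k+o(1))n$ is not correct. Lov\'{a}sz's bound together with $\VC \le n$ only pins down the \emph{ratio}, not the individual quantities. And indeed, in the construction the paper actually uses, each part has $r + (rk+1)$ vertices while the minimum vertex cover has size only about $rk/2$ — roughly half of a part, nowhere near $(1-o(1))$ of one.

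Second, and more seriously, the family of constructions you propose (symmetric designs, random $k$-partite hypergraphs, tensor powers) pushes toward a fractional cover that is uniform or near-uniform across a part, and that cannot work. If every vertex carries LP weight $w$, feasibility on any hyperedge forces $kw \ge 1$, so $\LP \ge n$; since $\VC \le n$ always, the ratio collapses to $\le 1$. The asymmetry of the LP weights \emph{within} a part is essential, and the paper bakes it directly into the construction: each part consists of $r$ ``weighted'' vertices $x_{i1},\dots,x_{ir}$ carrying LP weights $h(x_{ij}) = 2j/(rk)$ plus $rk+1$ duplicated ``dummy'' vertices of weight $0$, and the hyperedges are declared to be \emph{exactly} the transversals of total LP weight $\ge 1$. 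With that definition, $h$ is automatically a feasible LP solution of value $r+1$, no separate LP argument needed. The dummy duplication guarantees a minimum cover lives entirely inside the weighted vertices, and the $\VC \ge rk/2$ lower bound then follows in a few lines: letting $j_i$ be the largest index in $[r]$ with $x_{ij_i}$ uncovered in part $i$ (or $0$), the transversal $\{x_{1 j_1},\dots,x_{k j_k}\}$ must fail to be a hyperedge, so $\sum_i j_i < rk/2$, whence $|V'| \ge \sum_i (r - j_i) > rk/2$. No shifting, Kruskal--Katona, or random union bound is required.

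So the gap in your proposal is twofold: the mischaracterization of what the target numbers must be, and — more importantly — the absence of the within-part asymmetry that makes a small LP value even possible. The threshold-by-LP-weight construction is the key idea you are missing; once you have it, the hard step you correctly identified (the integer VC lower bound) becomes elementary rather than a deep extremal argument.
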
A proof of the above theorem describing the integrality gap
construction is included in Section \ref{sec:aharoni}.
The problem was shown to be
APX-hard in \cite{ISY05} and \cite{GS09} for $k=3$ which can be
extended easily to $k \geq 3$. 
A recent work of Guruswami and Saket~\cite{GS10}
showed the following non-trivial hardness of approximation 
factor for general $k$.
\begin{Thm}[Guruswami and Saket~\cite{GS10}]
For any $\epsilon > 0$ and $k \geq 5$, \hvcpartite is NP-hard to approximate
within a factor of $\frac{k}{4}-\epsilon$. Assuming the UGC yields an
optimal hardness factor of 
$\frac{k}{2}-\epsilon$ for $k\geq 3$.
\end{Thm}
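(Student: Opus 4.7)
The plan is to reduce from a hard starting problem via a Long Code gadget on the vertices of the hypergraph, adapted from the LP integrality gap of Aharoni--Holzman--Krivelevich. For the UGC-based $\frac{k}{2}-\eps$ statement, I would begin with a Unique Games instance $\mathcal{U}$ with vertex set $W$, alphabet $[R]$, and projection constraints $\pi_e\from [R]\to[R]$. The reduction would build a $k$-partite $k$-uniform hypergraph with $k$ vertex blocks (one per part), each block containing, for every $w\in W$, a Long Code table $\{0,1\}^R$ of a candidate label for $w$. A hyperedge would be produced by a verifier test that picks a random edge in $\mathcal{U}$, samples a biased-random $k$-tuple of strings, applies the projections $\pi_e$ to route them through the $k$ layers, and places a hyperedge on the corresponding $k$ vertices. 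The bias is tuned so that the fraction of $0$'s on each coordinate matches the $\frac{2}{k}$ fraction of dictator-vertices lying in an independent set in the intended YES solution.

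For completeness, given a good labeling $\sigma\from W\to[R]$ in the YES case, the independent set would be the union, across all Long Codes, of vertices whose $\sigma(w)$-th coordinate is $0$. Under the bias this gives an independent set of density $\frac{2}{k} - o(1)$, yielding a vertex cover of size $\bigl(1-\frac{2}{k}+o(1)\bigr)N$.

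For soundness, I would assume an independent set of density noticeably larger than $\frac{2}{k}$ and extract a labeling. By averaging, a constant fraction of Long Codes contain dense independent subsets in each of the $k$ layers; the cross-layer constraint forces these families to be simultaneously large and to avoid a common structure, so a Fourier / influence-decoding argument (on the indicator functions of the independent subsets, folded appropriately) yields a coordinate of non-negligible influence per table. One then defines $\sigma(w)$ to be a random high-influence coordinate, and the projections from the test imply this labeling satisfies many Unique Games edges, contradicting the UGC soundness.

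For the NP-hardness $\frac{k}{4}-\eps$ statement, the starting point would be replaced by Label Cover, and one would lose a factor of $2$ in the analysis because Label Cover has $d$-to-$1$ projections rather than bijections: the cross-layer decoding argument produces a pair of labels per vertex rather than a single label, so one can only guarantee density $\frac{1}{k}$ rather than $\frac{2}{k}$ for an independent set in the NO case. The main obstacle, and the technical heart of the argument, is precisely the soundness step: establishing that any dense independent set across the $k$ biased Long Codes must jointly witness a high-influence coordinate in the \emph{same} direction across layers after accounting for the projections. This requires a cross-influence inequality for biased measures that extends the AHK integrality-gap structure to the Fourier-analytic setting — exactly the sort of cross-intersecting set-family analysis that the present paper later develops to close the remaining gap to $\frac{k}{2}-1+\frac{1}{2k}-\eps$.
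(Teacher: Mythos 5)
This theorem is stated as a citation of prior work (Theorem from \cite{GS10}); the present paper does not prove it and only describes the approach at a high level in the introduction. You are right that the UGC-based half of \cite{GS10} turns the Aharoni--Holzman--Krivelevich integrality gap into a biased Long Code gadget composed with Unique Games, analyzed via influences and the Invariance Principle (this is exactly what the paper attributes to \cite{GS10} via \cite{KMTV09}), so your general strategy is in the right spirit. However, your completeness and soundness densities are inconsistent with a $\frac{k}{2}-\eps$ gap. You place the YES-case independent set at density $\frac{2}{k}$ (so the vertex cover has density $1-\frac{2}{k}$) and in soundness aim to bound the NO-case independent set by $\frac{2}{k}+o(1)$; that would yield a vertex-cover ratio of roughly $1$, not $\frac{k}{2}$. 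What is actually needed is a \emph{small} vertex cover in the YES case, of density about $\frac{1}{k}$ (equivalently, an independent set of density about $1-\frac{1}{k}$), together with a NO-case independent set bounded near $\frac{1}{2}$ (so every vertex cover has density at least roughly $\frac{1}{2}$), giving the $\frac{k}{2}$ gap. Concretely, the integrality-gap biases are $p_j = 1 - q_j - \eps$ with $q_j = \frac{2j}{rk}$, and the intended YES-case independent set picks up a $p_j$ fraction of each Long Code, averaging to $1 - \frac{1}{k} - \eps$; the quantity $\frac{2}{k}$ you used is the largest $q_j$ (one minus the smallest bias), not the completeness density. Finally, explaining the $\frac{k}{4}-\eps$ NP-hardness as ``replace Unique Games by Label Cover and lose a factor of $2$ from $d$-to-$1$ projections'' is a plausibility heuristic rather than a proof: the factor-$2$ loss has to be extracted from a concrete soundness decoding step, which your sketch leaves unspecified, and the starting point in this line of work is in fact the Multi-Layered PCP of \cite{DGKR03} rather than ordinary bipartite Label Cover.
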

{\bf Our Contribution.} We show a nearly optimal
NP-hardness result for approximating \hvcpartite.
\begin{Thm}\label{thm:main}
For any $\epsilon > 0$ and integer $k \ge 4$, it is NP-hard to
approximate the minimum vertex cover on $k$-partite $k$-uniform
hypergraphs within to a factor of $\frac{k}{2}-1+\frac{1}{2k}-\epsilon$.
\end{Thm}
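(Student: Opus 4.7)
The plan is a reduction from the Multi-Layered PCP of Dinur--Guruswami--Khot--Regev \cite{DGKR03} using a biased Long Code gadget that lifts the integrality gap construction of Aharoni, Holzman and Krivelevich \cite{AHK96} to the NP-hardness setting. The Multi-Layered PCP provides a Label Cover instance organized into $L$ layers $U_1,\ldots,U_L$ with label set $[R]$, projection constraints between every pair of layers, perfect completeness, and a weak density soundness: for every labeling that touches a non-trivial fraction of the layers, only a $\delta$-fraction of constraints between any pair of touched layers is satisfied. The parameters $L,R,\delta$ will be chosen as functions of $\eps$ and $k$.

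The reduction replaces each Label Cover vertex $u$ by a Long Code block $B_u = \{0,1\}^R$, whose elements become vertices of the output hypergraph. The $k$-partition is induced by choosing $k$ distinct layers; part $i$ is the union of the blocks $B_u$ for $u\in U_i$, equipped with its own bias $p_i \in (0,1)$ and product measure $\mu_{p_i}$. The $p_i$'s are chosen to mirror the AHK assignment and thereby drive the target ratio $k/2-1+1/(2k)$. For each $k$-tuple of Label Cover vertices (one per chosen layer) and each consistent $k$-tuple of labels obtained by pulling back along the projections to a common ``top'' layer, a hyperedge is added on the corresponding coordinates of the $k$ Long Codes whenever the indicator subsets $S_1,\ldots,S_k \subseteq [R]$ satisfy the AHK-dictated intersection pattern. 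Completeness follows by dictator decoding: given a satisfying labeling $\sigma$, place into the independent set every $S$ with $\sigma(u)\in S$ for each block $B_u$; this subset has $\mu_{p_i}$-measure $p_i$ in part $i$ and misses every hyperedge, since the pulled-back $S$'s all share the element $\sigma(u)$.

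For the soundness direction, let $\mathcal{I}$ be any large independent set and let $\mathcal{F}_u \subseteq 2^{[R]}$ be the up-family of coordinates placed in $\mathcal{I}$ within each block $B_u$. Hyperedge-freeness means that, along every PCP hyperedge, the $k$ pulled-back families $\mathcal{F}_{u_1},\ldots,\mathcal{F}_{u_k}$ are \emph{cross-intersecting} in the AHK sense, but viewed under the \emph{different} biased product measures $\mu_{p_1},\ldots,\mu_{p_k}$. The heart of the proof is a structural theorem: whenever $k$ families are cross-intersecting in this sense and their biased measures collectively exceed the completeness value by a positive constant, each family must be concentrated on a small list of influential coordinates. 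Once this is established, standard influence-based decoding (averaged over PCP hyperedges) produces, for many Label Cover vertices across many layer pairs, a short candidate-label list whose random assignment satisfies noticeably more than a $\delta$-fraction of Label Cover constraints, contradicting the PCP soundness.

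The main obstacle is the structural theorem for biased cross-intersecting families. Single-Long-Code tools such as biased Fourier analysis and hypercontractivity, as used in Dinur--Safra and Khot--Regev, control a single family under one product measure, whereas here the coupled cross-intersection condition ties $k$ families under $k$ different measures simultaneously. My plan is to derive the required bound by extending classical extremal set-theoretic results on cross-intersecting families (Frankl, Tokushige, and the AHK analysis itself) from the uniform setting to biased product measures, and then to convert the resulting combinatorial bound into an influence-concentration statement in a form compatible with the projection structure of the Multi-Layered PCP.
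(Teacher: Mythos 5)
Your proposal correctly identifies the overall skeleton---a reduction from the Multi-Layered PCP of \cite{DGKR03}, biased Long Codes inspired by the Aharoni--Holzman--Krivelevich integrality gap, and a structural theorem about cross-intersecting families under biased measures---but several construction-level details are off in ways that would derail the argument, and the soundness plan is pointed at a statement the paper does not actually prove.

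First, you assign a single bias $p_i$ to each part $i$. The paper instead attaches, to each PCP variable $x$, a whole array of $r(k+1)$ Long Code blocks $\calH^x_{ij}$ for $(i,j) \in [k+1]\times[r]$, where the bias $p_j = 1 - q_j - \eps$ with $q_j = 2j/(rk)$ depends on the index $j$, not the part $i$. This mirrors the $r$ distinct LP values $2j/(rk)$ in the AHK construction, and the hyperedges are placed only between blocks whose indices $(j_1,\ldots,j_{k+1})$ satisfy $\sum_i q_{j_i} \ge 1$; that is exactly the AHK edge condition. With one bias per part, the $j$-indexed tradeoff that drives the $(k/2 - o(1))$ integrality gap disappears, and there is no obvious way to recover the target ratio. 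Second, you do not include the $k+1$ dummy vertices, which play the role of the zero-LP-value vertices $y_{il}$ of AHK (they let hyperedges effectively involve fewer ``real'' Long Code vertices when some $j_i = 0$); these are essential to the weight/ratio bookkeeping and to why the construction is $(k+1)$-partite rather than $k$-partite. Third, your hyperedges pull back $k$ variables to a common top layer, whereas the paper's hyperedges involve only two PCP variables $x,y$ with a constraint $\pi_{x\to y}$: $k$ Long-Code (or dummy) vertices come from $x$'s blocks and one vertex from $y$'s block, with the condition $\pi_{x\rightarrow y}(\cap_i v_i) \cap u = \emptyset$.

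On the soundness side, the structural statement you aim for (``each family is concentrated on a small list of influential coordinates,'' then ``influence-based decoding'') is not what the paper proves or needs. The paper's Lemma 2.7 shows that if $k$ families are $k$-wise $t$-cross-intersecting and $\sum_i q_i \ge 1$, then some family has $\mu_{1-q_i - \delta}$-measure below $\eps$; the proof is entirely combinatorial, via shifting and a balls-in-bins argument, with no Fourier analysis, no influences, and no hypercontractivity. The decoding then proceeds by the \emph{contrapositive}: since all significant families have measure at least $\eps/2$, they cannot be $t$-cross-intersecting, so there exist witnessing sets $v_i$ whose intersection $B(x)$ has size less than $t$; this small set is the candidate label list, and independence forces every significant block on the $y$ side to meet $\pi_{x\to y}(B(x))$, which a Claim-5.4-of-\cite{DGKR03}-style sunflower/disjointness argument turns into a good Label Cover labeling. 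So while you are right to reach for extremal set theory rather than invariance-principle machinery, your envisioned bridge through ``influence concentration'' is a detour; what is actually needed is the small-measure-or-small-intersection dichotomy, realized through a left-shifting argument you have not sketched.
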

Our result significantly improves on the NP-hardness factor obtained
in \cite{GS10} and is off by at most an additive constant of $1$ from
the optimal for any $k \geq 4$. The next few paragraphs give an
overview of the techniques used in this work.

\smallskip
{\bf Techniques.} 
It is helpful to first briefly review the hardness reduction of
\cite{DGKR03} for \hvc which begins with the construction of a new 
{\it Multi-Layered
PCP}. This is a two variable CSP consisting of 
several {\it layers} of variables, and constraints between the 
variables
of each pair of layers. The work of \cite{DGKR03} shows that it is NP-hard 
to find a labeling to the variables which satisfies a small fraction
of the constraints between {\it any} two layers, even if there is a
labeling that satisfies all the constraints of the instance. The 
reduction to a $k$-uniform hypergraph
(as an instance of \hvc) involves replacing each
variable of the PCP with a biased Long Code, defined in~\cite{DS02}, 
where the bias depends on $k$. 

The starting point for our hardness reduction for \hvcpartite is -- as
in \cite{DGKR03} -- the
Multi-Layered PCP. While we do not explicitly
construct a standalone Long Code based gadget, our reduction can be
thought of as
adapting the
integrality gap construction of Aharoni \etal~\cite{AHK96} into a Long
Code based gadget in a manner that preserves the $k$-uniformity and
$k$-partiteness of the integrality gap.

Such transformations of integrality gaps into Long
Code based gadgets have recently been studied in the works of
Raghavendra~\cite{R08} and Kumar, Manokaran, Tulsiani and Vishnoi~\cite{KMTV09} which show this for a wide class of CSPs
and their 
appropriate LP and SDP
integrality gaps. These Long Code based gadgets can 
be combined with a Unique Games instance to yield tight UGC based
hardness results, where the reduction is analyzed via the Mossel's {\it
Invariance Principle}~\cite{Mossel08}. 
Indeed, for \hvcpartite the work of Guruswami and Saket~\cite{GS10}
combines the integrality gap of \cite{AHK96} with (a slight
modification) of the approach of Kumar \etal~\cite{KMTV09} to obtain 
an optimal UGC based
hardness result. 

Our reduction, on the other hand, combines Long Codes with the
Multi-Layered PCP instead of Unique Games and so we cannot adopt a
Invariance Principle based analysis. 
Thus, in a flavor similar to that of \cite{DGKR03}, 
our analysis is via extremal combinatorics. However, our {\it gadget}
involves several biased Long Codes with different biases 
and each hyperedge includes vertices from different Long Codes, unlike
the construction in \cite{DGKR03}. For 
our analysis, we use 
structural properties of a
{\it cross-intersecting} collection of set families. A collection of set
families
is {\it cross-intersecting} if any intersection of subsets -- 
each chosen from a
different family -- is large. Variants of this notion
have previously been studied in extremal set theory, see 
for example~\cite{AL09}. 
We prove an upper bound on the measure of the 
smallest family in such a collection. 
This enables a
small vertex cover (in the hypergraph of our reduction) to be 
{\it decoded} into a good labeling to the
Multi-Layered PCP. 

The next section defines and analyzes the above mentioned 
cross-intersecting set families. 
Section \ref{sec:multi} defines the Multi-Layered PCP 
of Dinur \etal \cite{DGKR03} and states their hardness for it. 
In Section \ref{sec:redn} we describe our reduction and prove
Theorem \ref{thm:main}.
\section{Cross-Intersecting Set Families}
\label{sec:intersecting}
We use the
notation $[n] =\{1,\ldots,n\}$ and $2^{[n]} = \{F\ |\ F \subseteq
[n]\}$. We begin by defining cross-intersecting set families:
\begin{Def}
A collection of $k$ families $\calF_1,\ldots,\calF_k \subseteq 2^{[n]}$,
is called $k$-wise $t$-cross-intersecting if for every choice of sets $F_i \in  \calF_i$
for  $i = 1,\ldots,k$, we have $|F_1\cap\ldots\cap F_k| \ge t$.
\end{Def}

We will work with the $p$-biased measure on the subsets of $[n]$, which is defined as follows: 
\begin{Def}
Given a bias parameter $0 < p < 1$, we define the
measure $\mu_p$ on the subsets of $[n]$ as: $
\mu_p(F) ~:=~ p^{|F|}\cdot(1-p)^{n - |F|}\ .$
The measure of a family $\calF$ is defined as $\mu_p(\calF) =
\sum_{F \in \calF} \mu_p(F)$.
\end{Def}

Now, we introduce an important technique for analyzing
cross-intersecting families -- the shift operation (see Def 4.1,
pg. 1298 \cite{Handbook}). Given a family
$\calF$, define the $(i,j)$-shift as follows:
\begin{align*}
S_{ij}^{\calF}(F) = \left \{ 
\begin{array}{ll}
(F \cup \{i\} \backslash \{j\})  & \textrm{if } j \in F,\ i \notin F \textrm{ and } (F \cup\{i\} \backslash \{j\}) \notin \calF \\
F & \textrm{otherwise.}
\end{array}
\right .
\end{align*}
Let the $(i,j)$-shift of a family $\calF$ be
$S_{ij}(\calF) = \{S_{ij}^{\calF}(F)\ |\ F \in \calF\}$. Given a family
$\calF \subseteq 2^{[n]}$, we repeatedly apply $(i,j)$-shift for $1 \le i <
j \le n$ to $\calF$ until we obtain a family that is invariant under
these shifts. Such a family is called a \emph{left-shifted family} and
we will denote it by $S(\calF)$. 

The following observations about left-shifted families 
follow from the definition.
\begin{Obs}
\label{obs:shift}
Let $\calF \subseteq 2^{[n]}$ be a left-shifted family. Consider $F \in \calF$ such that $i \notin F$ and $j
\in F$ where $i < j$. Then, $(F \cup \{i\}
\backslash \{j\})$ must be in $\calF$.
\end{Obs}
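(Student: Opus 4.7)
The plan is to derive this directly from the defining property of a left-shifted family, namely that $\calF$ is a fixed point of every $(i,j)$-shift with $i < j$. I would argue by contradiction: suppose there exists some $F \in \calF$ with $i \notin F$, $j \in F$, and $i < j$ for which $F' := (F \cup \{i\}) \setminus \{j\} \notin \calF$. I then inspect the value of $S_{ij}^{\calF}$ on this particular $F$.

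By the case distinction in the definition of $S_{ij}^{\calF}$, the three hypotheses $j \in F$, $i \notin F$, and $F' \notin \calF$ are exactly the conditions that select the non-identity branch, so $S_{ij}^{\calF}(F) = F'$. Hence $F' \in S_{ij}(\calF)$. But $\calF$ being left-shifted means precisely that $S_{ij}(\calF) = \calF$ as sets, so $F' \in \calF$, contradicting the assumption. The observation follows. There is no real obstacle; the only point worth spelling out is the logical direction, namely that invariance under every $(i,j)$-shift forces every image $S_{ij}^{\calF}(F)$ to already lie in $\calF$, which is exactly what rules out a "missing" set of the form $(F \cup \{i\}) \setminus \{j\}$.
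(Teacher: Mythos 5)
Your proof is correct and is the natural unwinding of the definition; the paper simply asserts the observation "follows from the definition" without writing it out, and your contradiction argument — invariance under $S_{ij}$ forces $S_{ij}^{\calF}(F) = (F\cup\{i\})\setminus\{j\}$ to already lie in $\calF$ — is exactly the intended reasoning.
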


\begin{Obs}
\label{obs:shift-measure}
Given $\calF \subseteq 2^{[n]}$, there is a bijection between the sets in $\calF$ and $S(\calF)$ that
preserves the size of the set. Thus, for any fixed $p$, the measures of $\calF$ and $S(\calF)$ are the same under $\mu_p$ i.e. $\mu_p(\calF) = \mu_p(S(\calF))$. 
\end{Obs}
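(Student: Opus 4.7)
The plan is to reduce the claim to a single $(i,j)$-shift and then iterate. For a fixed pair $i<j$, the map $S_{ij}^{\calF} \colon \calF \to S_{ij}(\calF)$ is by definition either the identity on $F$, or it replaces $F$ by $F \cup \{i\} \setminus \{j\}$. In the second case $j \in F$ and $i \notin F$, so the new set differs from $F$ by swapping one element for another, and in particular has the same cardinality as $F$. Thus $S_{ij}^{\calF}$ is already size-preserving; all that remains for a single shift is to verify it is a bijection between $\calF$ and its image $S_{ij}(\calF)$.

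Surjectivity onto $S_{ij}(\calF)$ is immediate from the definition $S_{ij}(\calF) = \{S_{ij}^{\calF}(F) : F \in \calF\}$, so I would focus on injectivity. Suppose $S_{ij}^{\calF}(F_1) = S_{ij}^{\calF}(F_2)$ for $F_1, F_2 \in \calF$. The argument splits into three cases: (i) both sets are fixed, in which case $F_1 = F_2$ trivially; (ii) both are shifted, in which case $F_1 \cup \{i\} \setminus \{j\} = F_2 \cup \{i\} \setminus \{j\}$ with $j \in F_1, F_2$ and $i \notin F_1, F_2$, forcing $F_1 = F_2$; and (iii) exactly one, say $F_1$, is fixed while $F_2$ is shifted. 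Then $F_1 = F_2 \cup \{i\} \setminus \{j\}$. Since $F_1 \in \calF$, this contradicts the defining condition that $F_2$ is shifted only when $F_2 \cup \{i\} \setminus \{j\} \notin \calF$. Hence only cases (i) and (ii) occur, and $S_{ij}^{\calF}$ is injective.

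Because $S(\calF)$ is obtained from $\calF$ by finitely many $(i,j)$-shifts and each step supplies a size-preserving bijection from the current family to its shift, the composition yields a size-preserving bijection $\varphi \colon \calF \to S(\calF)$. Finally, since $\mu_p(F) = p^{|F|}(1-p)^{n-|F|}$ depends only on $|F|$, summing over $\calF$ and pairing each $F$ with $\varphi(F)$ gives
\[
\mu_p(\calF) = \sum_{F \in \calF} p^{|F|}(1-p)^{n-|F|} = \sum_{F \in \calF} p^{|\varphi(F)|}(1-p)^{n-|\varphi(F)|} = \mu_p(S(\calF)).
\]
The only subtlety is case (iii) of the injectivity check; this is where the nonstandard ``otherwise'' clause in the definition of $S_{ij}^{\calF}$ is essential, since it prevents a shift from collapsing two distinct sets of $\calF$ onto the same target.
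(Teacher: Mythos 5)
Your proof is correct and fills in the details that the paper omits: the paper simply asserts that this observation ``follows from the definition'' without proof, and your argument is the natural one. You correctly reduce to a single $(i,j)$-shift, observe that the shift map is size-preserving (swap $j$ for $i$), and verify injectivity by case analysis, with the key case (iii) being exactly where the ``$\notin\calF$'' clause in the definition of $S_{ij}^{\calF}$ is needed to avoid a collision. The one point you take on faith is that the left-shifting process terminates after finitely many steps so that the composition makes sense; this is a standard fact (e.g., each nontrivial shift strictly decreases $\sum_{F\in\calF}\sum_{x\in F}x$) that the paper also implicitly assumes when it defines $S(\calF)$, so it is not a gap relative to the paper.
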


The following lemma shows that the cross-intersecting
property is preserved under left-shifting.
\begin{Lem}
\label{lem:shift-intersecting}
Consider families $\calF_1,\ldots,\calF_k \subseteq 2^{[n]}$ that are
$k$-wise $t$-cross-intersecting. Then, the families $S(\calF_1),\ldots,S(\calF_k)$ are also
$k$-wise $t$-cross-intersecting.
\end{Lem}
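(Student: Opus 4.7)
The plan is to reduce the lemma to a single-shift statement: fix $i<j$, and show that if $\calF_1,\ldots,\calF_k$ are $k$-wise $t$-cross-intersecting then so are $S_{ij}(\calF_1),\ldots,S_{ij}(\calF_k)$. Since $S(\calF)$ is built by iteratively applying $(i,j)$-shifts for pairs $i<j$ until a fixed point, the lemma then follows by induction on the number of such shifts.

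To prove the single-shift claim, I would first note that $S_{ij}^{\calF_\ell}$ is a bijection from $\calF_\ell$ onto $S_{ij}(\calF_\ell)$, so any choice $F'_\ell \in S_{ij}(\calF_\ell)$ comes from a unique $F_\ell \in \calF_\ell$. The crucial structural fact is that $F'_\ell \setminus \{i,j\} = F_\ell \setminus \{i,j\}$, so $\bigcap_\ell F'_\ell$ and $\bigcap_\ell F_\ell$ agree outside $\{i,j\}$. Let $A = \bigcap_\ell F_\ell \setminus \{i,j\}$. The analysis then splits into cases. If no family is actually shifted, $\bigcap_\ell F'_\ell = \bigcap_\ell F_\ell$ and we are done by hypothesis. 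Otherwise some $\ell_0$ is shifted, which by the shift rule forces $j \in F_{\ell_0}$, $i \notin F_{\ell_0}$, and $(F_{\ell_0} \cup \{i\}) \setminus \{j\} \notin \calF_{\ell_0}$. In particular, $i \notin \bigcap_\ell F_\ell$ and $j \notin \bigcap_\ell F'_\ell$, so $|\bigcap_\ell F_\ell| \le |A|+1$, while $|\bigcap_\ell F'_\ell|$ equals $|A|$ or $|A|+1$ depending on whether $i \in \bigcap_\ell F'_\ell$. In the former sub-branch the bound $|A|+1 \ge |\bigcap_\ell F_\ell| \ge t$ inherited from the original families closes the case.

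The main obstacle is the remaining subcase, where some $\ell^*$ has $i \notin F'_{\ell^*}$. Because every shifted family contains $i$, such $\ell^*$ must be unshifted and distinct from $\ell_0$, with $F'_{\ell^*} = F_{\ell^*}$ and $i \notin F_{\ell^*}$; the shift rule then forces either $j \notin F_{\ell^*}$ or $(F_{\ell^*} \cup \{i\}) \setminus \{j\} \in \calF_{\ell^*}$. In the first situation, $j \notin \bigcap_\ell F_\ell$, yielding $|A| = |\bigcap_\ell F_\ell| \ge t$ directly. In the second, the plan is to exploit the alternative set by replacing $F_{\ell^*}$ with $G_{\ell^*} := (F_{\ell^*} \cup \{i\}) \setminus \{j\} \in \calF_{\ell^*}$, while keeping $F_\ell$ for every $\ell \neq \ell^*$. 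This modified tuple lies in the original families; its intersection still agrees with $A$ outside $\{i,j\}$, contains neither $i$ (since $F_{\ell_0}$ excludes $i$) nor $j$ (since $G_{\ell^*}$ excludes $j$), and so equals $A$. Applying the original $k$-wise $t$-cross-intersecting hypothesis to this tuple forces $|A| \ge t$, giving $|\bigcap_\ell F'_\ell| \ge |A| \ge t$ and completing the single-shift step.
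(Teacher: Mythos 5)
Your proof is correct and takes essentially the same approach as the paper: reduce to a single $(i,j)$-shift, case-analyze on whether any family is actually shifted and whether $i$ survives in the shifted intersection, and in the hard subcase exploit the fact that the unshifted culprit $\calF_{\ell^*}$ must already contain $(F_{\ell^*}\cup\{i\})\setminus\{j\}$, substituting it into the original tuple to force $|A|\ge t$. The only differences from the paper's proof are cosmetic: you work backward from $F'_\ell$ via the bijectivity of the shift map and introduce the set $A$, while the paper works forward from arbitrary $F_\ell\in\calF_\ell$ and handles the ``$j\notin\bigcap_\ell F_\ell$'' situation as a top-level case rather than burying it in a subcase; the key substitution argument is identical.
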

\begin{proof}
Given the assumption, we will prove that
$S_{ij}(\calF_1),\ldots,S_{ij}(\calF_k)$ are $k$-wise
$t$-cross-intersecting. A simple induction would then imply the
statement of the lemma.

Consider arbitrary sets $F_i \in \calF_i$. By our assumption, $|F_1 \cap \ldots
\cap F_k| \ge t$. It suffices to prove that $|S_{ij}^{\calF_1}(F_1)
\cap \ldots \cap S_{ij}^{\calF_k}(F_k)| \ge t$. If $j \notin F_1 \cap
\ldots \cap F_k$, the claim is true since the only element being
deleted is $j$. Thus, for all $l \in [k]$, $j \in F_k$. If for all $l \in
[k]$, $S_{ij}^{\calF_l}(F_l) = F_l$, the claim is trivial. Thus, let
us assume wlog that $S_{ij}^{\calF_1}(F_1) \neq F_1$. Thus, $i \notin
F_1$ and hence $i \notin F_1 \cap \ldots \cap F_k$. Now, if $i \in
S_{ij}^{\calF_1}(F_1) \cap \ldots \cap S_{ij}^{\calF_k}(F_k)$, we get
that $j$ is replaced by $i$ in the intersection and we are done. Thus,
we can assume wlog that $i \notin S_{ij}^{\calF_2}(F_2)$. This implies
that $i \notin F_2$ and $F_2 \cup \{i\} \backslash \{j\} \in \calF_2$. Now
consider $F_1 \cap (F_2 \cup \{i\} \backslash \{j\}) \cap F_3 \cap \ldots
\cap F_k$. Since we are picking one set from each
$\calF_i$, it must have at least $t$ elements, but this intersection
does not contain $j$ and hence it is a subset of
$S_{ij}^{\calF_1}(F_1) \cap \ldots \cap S_{ij}^{\calF_k}(F_k)$,
implying that $|S_{ij}^{\calF_1}(F_1)
\cap \ldots \cap S_{ij}^{\calF_k}(F_k)| \ge t$.
\end{proof}

Next, we prove a key structural lemma about cross-intersecting
families which states that for at least one of the families, all of its subsets have a dense prefix.
\begin{Lem}
\label{lem:intersecting-structure}
Let $q_1,\ldots,q_k \in (0,1)$ be $k$ numbers such that $\sum_i q_i \ge 1$ 
and let $\calF_1,\ldots,\calF_k \subseteq
2^{[n]}$ be left-shifted families that
are $k$-wise $t$-cross-intersecting for some $t \ge 1$. Then, there exists a $j
\in [k]$ such that for all sets $F
\in \calF_j$, there exists a positive integer $r_F \le n-t$ such that $|F\cap [t+r_F]| > (1-q_i)(t+r_F)$.
\end{Lem}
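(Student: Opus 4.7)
The plan is a proof by contradiction. Suppose that for every $j \in [k]$ there is a set $F_j \in \calF_j$ satisfying $|F_j \cap [t+r]| \le (1-q_j)(t+r)$ for every positive integer $r \le n-t$; the goal is then to produce left-shifts $F_j' \in \calF_j$ with $|\bigcap_j F_j'| < t$, contradicting the $k$-wise $t$-cross-intersecting hypothesis. Write $F_j = \{a_1^{(j)} < \cdots < a_{m_j}^{(j)}\}$ and let $h_j(m) := m - |F_j \cap [m]|$. By Observation~\ref{obs:shift}, any $\{b_1 < \cdots < b_{m_j}\}$ with $b_l \le a_l^{(j)}$ for every $l$ lies in $\calF_j$. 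The negated sparseness rephrases as $h_j(m) \ge q_j m$ for every $m \ge t+1$; summing over $j$ and using $\sum_j q_j \ge 1$ gives the crucial aggregate bound $\sum_j h_j(m) \ge m$.

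The heart of the argument is a greedy ``covering''. I would build subsets $T_1, \ldots, T_k \subseteq \{t+1, \ldots, n\}$ satisfying (a) $\bigcup_j T_j = \{t+1, \ldots, n\}$ and (b) $|T_j \cap [m]| \le h_j(m)$ for every $j$ and every $m$, by processing $m = t+1, t+2, \ldots, n$ in order and assigning $m$ to any $j$ for which $|T_j \cap [m-1]| < h_j(m)$. The total slack $\sum_j (h_j(m) - |T_j \cap [m-1]|)$ at step $m$ is at least $m - ((m-1)-t) = t+1 \ge 1$, so a valid $j$ always exists; monotonicity of $h_j$ propagates the single-step inequality to condition (b) at every checkpoint $a_l^{(j)} \ge m$, and (b) is automatic at $m \le t$ since $T_j \cap [t] = \emptyset$.

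Given the $T_j$'s, let $b_l^{(j)}$ denote the $l$-th smallest element of $T_j^c := [n]\setminus T_j$. Condition (b) at $m = a_l^{(j)}$ gives $|T_j^c \cap [a_l^{(j)}]| \ge l$, hence $b_l^{(j)} \le a_l^{(j)}$, so Observation~\ref{obs:shift} places $F_j' := \{b_1^{(j)}, \ldots, b_{m_j}^{(j)}\}$ in $\calF_j$, and by construction $F_j' \subseteq T_j^c$. Property (a) then tells us that every $i \in \{t+1, \ldots, n\}$ misses some $F_j'$, so $\bigcap_j F_j' \subseteq [t]$ and $|\bigcap_j F_j'| \le t$. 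A short refinement---extending the greedy to also accept the element $t$, which is possible whenever some $j$ has $[t]\not\subseteq F_j$ (equivalently $h_j(t) \ge 1$)---upgrades this to $|\bigcap_j F_j'| \le t-1$, producing the desired contradiction.

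I expect the main technical obstacle to be the greedy covering of Step~1: although the \emph{aggregate} slack bound $\sum_j h_j(m) \ge m$ is immediate from summing the sparseness assumptions, one must check that this forces an \emph{individual} $j$ with available capacity at every single step, and that the resulting $T_j$'s satisfy condition (b) at every checkpoint $a_l^{(j)}$ (which is where monotonicity of $h_j$ is used). The secondary subtlety is the boundary situation where $[t] \subseteq F_j$ for every $j$: the sparseness forces each $q_j \le 1/(t+1)$, hence $k \ge t+1$, and one must use the tightness of $\sum_j q_j \ge 1$ combined with the strict direction of the negated sparseness to extract the final ``$-1$'' in $|\bigcap_j F_j'| \le t-1$.
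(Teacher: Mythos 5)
Your proof follows essentially the same strategy as the paper's: negate the conclusion, exploit left-shiftedness through Observation~\ref{obs:shift}, and show that the ``complements'' of the $F_j$'s can be spread out to cover every position past $t$, forcing a small intersection. Your greedy construction of the sets $T_j$ is a cleaner re-packaging of the paper's balls-and-bins procedure (the paper starts with all balls placed at complement positions and moves them forward into empty bins; you instead build the covers incrementally while maintaining a slack invariant). The core steps---the slack estimate showing the greedy never stalls, the checkpoint inequality $b_l^{(j)}\le a_l^{(j)}$ certifying $F_j'\in\calF_j$ via repeated shifts, and the containment $F_j'\subseteq T_j^c$---are all sound.

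The gap is in the final step, and it cannot be closed in the way you sketch. Because you read ``positive integer $r_F$'' literally, your negation only delivers $h_j(m)\ge q_j m$ for $m\ge t+1$, not at $m=t$, so your greedy covers only $\{t+1,\ldots,n\}$ and you land at $|\bigcap_j F_j'|\le t$, which is not yet a contradiction. You propose to also cover position $t$ when some $j$ has $h_j(t)\ge 1$, and you correctly flag the remaining case $[t]\subseteq F_j$ for every $j$, but no ``tightness'' argument can dispose of it: the lemma \emph{as literally stated is false} in that case. Take $t=1$, $k=2$, $q_1=q_2=\tfrac12$, and $\calF_1=\calF_2=\{\{1\}\}\subseteq 2^{[3]}$; these are left-shifted and $2$-wise $1$-cross-intersecting, yet for $F=\{1\}$ and every positive integer $r$ we have $|F\cap[1+r]|=1\le\tfrac12(1+r)$, so no positive $r_F$ exists. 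The defect is in the statement, not in your line of attack: the paper's own proof silently negates with ``for all $r\ge 0$'', and the downstream Lemmas~\ref{lem:intersecting-measure} and~\ref{lem:chernoff} only require $r_F\ge 0$, so ``positive integer'' should read ``non-negative integer.'' With the $r\ge 0$ negation, $\sum_j h_j(t)\ge t\ge 1$ immediately gives some $j$ with $h_j(t)\ge 1$, your greedy can cover position $t$ on its first step, $\bigcap_j F_j'\subseteq[t-1]$ follows with no boundary case, and your proof goes through.
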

\begin{proof}
Let us assume to the contrary that for every $i \in [k]$, there exists a set
$F_i \in \calF_i$ such that for all $r \ge 0$, $|F_i \cap [t+r]| \le
(1-q_i)(t+r)$. The following combinatorial argument shows that the families $\calF_i$
cannot be $k$-wise $t$-cross-intersecting.

Let us construct an arrangement of balls
and bins where each ball is colored with one of $k$ colors. Create $n$ bins
labeled $1,\ldots,n$. For each $i$ and for every $x
\in [n] \backslash F_i$, we place a ball with color $i$ in the bin
labeled $x$. Note that a bin can have several balls, but they
must have distinct colors. Given such an arrangement, we can recover the
sets it represents by defining $F_i^c$ to be the set of bins that contain
a ball with color $i$. 

Our initial assumption implies that $|F_i^c \cap [t+r]| \ge q_i(t+r)$. Thus, there are at least $\ceil{q_i(t+r)}$ balls with color $i$ in bins labeled
$1,\ldots,t+r$. The total number of balls in
bins labeled $1,\ldots,t+r$ is,
\[\sum_{i=1}^k |F_i^c \cap [t+r]| ~\ge~ \sum_{i=1}^k \ceil{q_i(t+r)} ~\ge~ \sum_{i=1}^k q_i(t+r) 
~\ge~ (t+r) ~\ge~ r+1,\]
where the last two inequalities follow using $\sum_i q_i
\ge 1$ and $t \ge 1$. 

Next, we describe a procedure to manipulate the above arrangement of balls.
\begin{program}
\FOR \= $r$ := 0 to $n-t$ \\
   \>    \IF bin $t+r$ is empty \\
   \>    \THEN  \IF a bin labeled from $1$ to $t-1$ contains a ball \THEN move it to bin $t+r$ \\
\> \>\> \>\ELSE \IF a bin labeled from $t$ to $t+r-1$ contains two
balls \THEN move one of them to bin $t+r$ \\
\> \>\>\>\>\> \ELSE output ``error''
\end{program}

We need the following lemma.
\begin{Lem} The above procedure satisfies the following properties:\\
1. The procedure never outputs \emph{error}.\\
2. At every step, any two balls in the same bin have different
colors. \\
3. At step $r$, define $G_i^{(r)}$ to be the set of labels of
the bins that do not contain a ball of color $i$. Then, for all $i \in
[k]$, $G_i^{(r)} \in \calF_i$. \\
4. After step $r$, the bins $t$ to $t+r$ have at least one ball each.
\end{Lem}
\begin{proof}
1. If it outputs error at step $r$, there must be at
most $r-1$ balls in bins $1$ to $t+r$. This is false at $r =
0$. Moreover, at step $r^\prime < r$, we could
have moved a ball only to a bin labeled in $[t,t+r]$.  
Thus, we get a contradiction. \\
2. Note that this is true at $r=0$ and a ball is only moved to an empty bin, which proves the claim. \\
3. Whenever we move
a ball from bin $i$ to $j$, we have $i < j$. Since 
$\calF_i$ are left-shifted, by repeated
application of Observation \ref{obs:shift}, we get that at step $r$,
$G_i^{(r)} \in \calF_i$. \\
4. Since the procedure never outputs error, at step $r$, if the bin
$t+r$ is empty, the procedure places a ball in it while not emptying
any bin labeled between $[t,t+r-1]$. This proves the claim.
\end{proof}
The above lemma implies that at the end of the procedure (after $r=n-t$), there is a ball in each of the bins
labeled from $[t,n]$. Thus, the sets $G_i = G_i^{(n-t)}$ satisfy $\cap_i G_i \subseteq
[t-1]$ and hence $|\cap_i G_i| \le t-1$. Also, we know that $G_i \in
\calF_i$. Thus, the families $\calF_i$ cannot be $k$-wise
$t$-cross-intersecting. This completes the proof of Lemma \ref{lem:intersecting-structure}.
\end{proof}

The above lemma, along with a Chernoff bound argument, shows that: Given
a collection of $k$-wise $t$-cross-intersecting families, one of them
must have a small
measure under an appropriately chosen bias.

\begin{Lem}
\label{lem:intersecting-measure}
For arbitrary $\epsilon,\delta > 0$, there exists some $t
= O\left(\frac{1}{\delta^2}\left(\log
  \frac{1}{\epsilon} + \log
  \left(1+\frac{1}{2\delta^2}\right)\right)\right)$ such that the
following holds: 
Given $k$ numbers $0 < q_i < 1$ such that $\sum_i q_i \ge 1$ 
and $k$ families, $\calF_1,\ldots,\calF_k \subseteq 2^{[n]}$, that
are $k$-wise $t$-cross-intersecting, there exists a $j$ such that
$\mu_{1-q_i-\delta}(\calF) < \epsilon$.
\end{Lem}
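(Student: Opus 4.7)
The plan is to first reduce to the left-shifted case, then apply the structural Lemma \ref{lem:intersecting-structure}, and finally convert its combinatorial conclusion into a measure bound via a Hoeffding-type tail inequality and a union bound over prefix lengths.

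First, by Lemma \ref{lem:shift-intersecting} the families $S(\calF_1),\ldots,S(\calF_k)$ are still $k$-wise $t$-cross-intersecting, and by Observation \ref{obs:shift-measure} we have $\mu_p(\calF_i)=\mu_p(S(\calF_i))$ for every $p$ and every $i$. So it suffices to prove the lemma under the extra assumption that every $\calF_i$ is left-shifted. Now apply Lemma \ref{lem:intersecting-structure}: there is an index $j\in[k]$ such that every $F\in\calF_j$ admits a positive integer $r_F\le n-t$ with $|F\cap[t+r_F]|>(1-q_j)(t+r_F)$. This is the index $j$ we will use.

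Next, bound $\mu_{p}(\calF_j)$ with $p:=1-q_j-\delta$. Under $\mu_p$, the indicators $X_x:=\mathbbm{1}[x\in F]$ for $x\in[n]$ are i.i.d.\ Bernoulli$(p)$. For each $r\in\{1,\ldots,n-t\}$, the quantity $|F\cap[t+r]|=\sum_{x=1}^{t+r}X_x$ has mean $p(t+r)$, so the event $\{|F\cap[t+r]|>(1-q_j)(t+r)\}=\{|F\cap[t+r]|>(p+\delta)(t+r)\}$ is a $\delta$-deviation above the mean. Hoeffding's inequality yields
\[
\Pr_{F\sim\mu_p}\bigl[\,|F\cap[t+r]|>(1-q_j)(t+r)\,\bigr]\;\le\;\exp\bigl(-2\delta^{2}(t+r)\bigr).
\]
By the structural property, $\calF_j$ is contained in the union over $r$ of these events, so a union bound and a geometric-series estimate give
\[
\mu_p(\calF_j)\;\le\;\sum_{r=1}^{n-t}\exp\bigl(-2\delta^2(t+r)\bigr)\;\le\;\frac{\exp(-2\delta^2 t)}{1-\exp(-2\delta^2)}.
\]

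Finally, we choose $t$ so that the right-hand side is less than $\epsilon$. Taking logarithms, it suffices to have $2\delta^2 t\;\ge\;\log(1/\epsilon)+\log\!\bigl(1/(1-e^{-2\delta^2})\bigr)$. A standard estimate shows $1-e^{-x}\ge x/(1+x)$ for $x>0$ (so that $\log(1/(1-e^{-x}))\le \log(1+1/x)$), applied with $x=2\delta^2$ this gives $\log(1/(1-e^{-2\delta^2}))\le\log(1+1/(2\delta^2))$. Hence
\[
t\;=\;O\!\left(\frac{1}{\delta^2}\left(\log\frac{1}{\epsilon}+\log\!\left(1+\frac{1}{2\delta^2}\right)\right)\right)
\]
suffices, matching the stated bound. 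The only mild obstacle is the last inequality on $1-e^{-x}$ needed to match the advertised $\log(1+1/(2\delta^2))$ form exactly; the rest is a clean combination of the shifting reduction, Lemma \ref{lem:intersecting-structure}, and a Hoeffding + union bound argument.
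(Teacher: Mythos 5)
Your proof is correct and follows the same route as the paper: left-shift via Lemma \ref{lem:shift-intersecting} and Observation \ref{obs:shift-measure}, apply the structural Lemma \ref{lem:intersecting-structure}, then bound the measure by a Chernoff/Hoeffding tail plus a union bound over prefix lengths and a geometric-series estimate. The paper packages the tail-plus-union-bound step as a separate sub-lemma (Lemma \ref{lem:chernoff}) while you inline it, but the estimates, including the inequality $1-e^{-x}\ge x/(1+x)$ yielding $\frac{1}{1-e^{-2\delta^2}}\le 1+\frac{1}{2\delta^2}$, are the same; the ``mild obstacle'' you flag is a standard consequence of $e^{-x}\le 1/(1+x)$, so there is no gap.
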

\begin{proof}First we prove the following lemma derived from 
the Chernoff bound.
\begin{Lem}
\label{lem:chernoff}
For arbitrary $\epsilon,\delta > 0$ and $0 < q < 1$, there exists some $t
= O\left(\frac{1}{\delta^2}\left(\log
  \frac{1}{\epsilon} + \log
  \left(1+\frac{1}{2\delta^2}\right)\right)\right)$ such that the following holds:

Any family $\calF \subseteq 2^{[n]}$ that satisfies that for every $F \in
\calF$, there exists an integer $r_F \ge 0$ such that $|F \cap [t+r_F]|
\ge (1-q)(t+r_F)$ must have $\mu_{1-q-\delta}(\calF) < \epsilon$.  
\end{Lem}
\begin{proof}
Note that $\mu_{1-q-\delta}(\calF)$ is equal to the probability that for a
random set $F$ chosen according to $\mu_{1-q-\delta}$ lies in
$\calF$. Thus, $\mu_{1-q-\delta}(\calF)$ is bounded by the probability that for a
random set $F$ chosen according to $\mu_{1-q-\delta}$, there exists
an $r_F$ that satisfies $|F \cap [t+r_F]| \ge (1-q)(t+r_F)$.

The Chernoff bound states that for a set of $m$ independent bernoulli
random variables $X_i$, with $\Pr[X_i = 1] = 1-q-\tau$,
\[\Pr\left[\sum_{i=1}^m X_i  \ge (1-q)m\right] \le e^{-2m\tau^2}\]

Thus, we get that for any $r \ge 0$, $\Pr[|F \cap [t+r]| \ge
(1-q)(t+r)] \le e^{-2(t+r)\delta^2}$. Summing over all $r$, we
get that,
\[\mu_{1-q-\delta}(\calF) \le \sum_{r \ge 0}e^{-2(t+r)\delta^2} \le
\frac{e^{-2t\delta^2}}{1-e^{-2\delta^2}} \le
e^{-2t\delta^2}\left(1+\frac{1}{2\delta^2}\right).\]

Thus, for $t = \Omega\left(\frac{1}{\delta^2}\left(\log
  \frac{1}{\epsilon} + \log
  \left(1+\frac{1}{2\delta^2}\right)\right)\right)$, $\mu_{1-q-\delta}(\calF)$
will be smaller than $\epsilon$.
\end{proof}
We now continue with the proof of Lemma \ref{lem:intersecting-measure}.
Our $t$ will be dictated by Lemma \ref{lem:chernoff} and will be
decided later. Consider the left-shifted families $S(\calF_i)$. By Lemma
\ref{lem:shift-intersecting}, we get that these families are also
$k$-wise $t$-cross-intersecting. Now, we can apply Lemma 
\ref{lem:intersecting-structure} with the given $q_i$'s to conclude
that there must exist a $j$ such that for all sets $F \in S(\calF_j)$,
there exists an $r_F$ such that $|F \cap [t+r_F]| > (1-q_j)(t+r_F)$. 

Now, we can use Lemma \ref{lem:chernoff} to conclude that if $t$ is
large enough ($t=\Omega\left(\frac{1}{\delta^2}\left(\log
  \frac{1}{\epsilon} + \log
  \left(1+\frac{1}{2\delta^2}\right)\right)\right)$ suffices), then
$S(\calF_j)$ must have measure at most $\epsilon$ under the measure
$\mu_{1-q_j-\delta}$, but this along with Observation
\ref{obs:shift-measure} implies that $\mu_{1-q_j-\delta}(\calF_j) < \epsilon$.
\end{proof}

\section{Multi-Layered PCP}\label{sec:multi}
In this section we describe the Multi-Layered PCP constructed in
\cite{DGKR03} and its useful properties. An instance $\Phi$ of the
Multi-Layered PCP is parametrized by
integers $L, R > 1$. The PCP consists of $L$ sets of variables $X_1,
\dots, X_L$. The label set (or range) of the variables in the $l^\textrm{th}$
set $X_l$ is a set $R_{X_l}$ where $|R_{X_l}| = R^{O(L)}$. For any two
integers $1 \leq l < l' \leq L$, the PCP has a set of constraints
$\Phi_{l,l'}$ in which each constraint depends on one variable $x \in
X_l$ and one variable $x' \in X_{l'}$. The constraint (if it exists)
between $x \in X_l$ and $x' \in X_{l'}$ ($l < l'$) is denoted and
characterized by a projection 
$\pi_{x\rightarrow x'} : R_{X_l}\mapsto R_{X_{l'}}$. A labeling to $x$
and $x'$ satisfies the constraint $\pi_{x\rightarrow x'}$ if the 
projection (via $\pi_{x\rightarrow x'}$) of the label
assigned to $x$ coincides with the label assigned to $x'$. 

The following useful `weak-density' property of the Multi-Layered PCP was
defined in \cite{DGKR03}.
\begin{Def}\label{def-weakly-dense}
An instance $\Phi$ of the Multi-Layered PCP with $L$
layers is \textnormal{weakly-dense} if for any $\delta > 0$, given $m
\geq \lceil\frac{2}{\delta}\rceil$ layers $l_1 < l_2 < \dots < l_m$
and given any sets $S_i\subseteq X_{l_i}$, for $i\in[m]$ such that
$|S_i|\geq \delta|X_{l_i}|$; there always exist two layers
$l_{i'}$ and $l_{i''}$ such that the constraints between the variables
in the sets $S_{i'}$ and $S_{i''}$ is at least
$\frac{\delta^2}{4}$ fraction of the constraints between the sets 
$X_{l_{i'}}$ and $X_{l_{l''}}$.
\end{Def}

The following inapproximability of the Multi-Layered PCP was proven by
Dinur et al. \cite{DGKR03} based on the PCP Theorem (\cite{AS},
\cite{ALMSS}) and Raz's Parallel Repetition Theorem (\cite{Raz}).
\begin{Thm}\label{thm-multi}
There exists a universal constant $\gamma >0$ such that for any
parameters $L > 1$ and $R$, there is a weakly-dense $L$-layered PCP
$\Phi = \cup \Phi_{l,l'}$ such that it is NP-hard to distinguish
between the following two cases:
\begin{itemize}
\item \textnormal{{\bf YES} Case:} There exists an assignment of
labels to the
variables of $\Phi$ that satisfies all the constraints.
\item \textnormal{{\bf NO} Case:} For every $1\leq l < l' \leq L$, 
not more that
$1/R^\gamma$ fraction of the constraints in $\Phi_{l,l'}$ can be
satisfied by any assignment. 
\end{itemize}
\end{Thm}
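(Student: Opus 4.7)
The plan is to construct the $L$-layered PCP on top of a bipartite Label Cover instance whose hardness follows from the PCP Theorem combined with Raz's Parallel Repetition Theorem. Start with a gap Label Cover instance $\mathcal{L} = (U, V, E, \{\pi_e\}_{e \in E})$ with label set of size $R$, obtained by applying parallel repetition to a standard Label Cover instance derived from gap-$3$-SAT. By Raz's theorem, it is NP-hard to distinguish whether $\mathcal{L}$ is perfectly satisfiable or has value at most $1/R^{\gamma_0}$, for some universal constant $\gamma_0 > 0$.

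Next, I would lift $\mathcal{L}$ into an $L$-layered PCP. For each $l \in [L]$, define $X_l$ to consist of ordered $L$-tuples in which $l$ coordinates are variables from $V$ and the remaining $L - l$ are from $U$, with the $U$- and $V$-coordinates coupled via Label Cover edges. Each variable's label is an $L$-tuple of individual labels, yielding $|R_{X_l}| = R^{O(L)}$. For layers $l < l'$, the constraint set $\Phi_{l,l'}$ consists of pairs $(x, x')$ where $x'$ is obtained from $x$ by promoting $l' - l$ of the $U$-coordinates of $x$ to neighboring $V$-coordinates via Label Cover edges; the projection $\pi_{x \to x'}$ applies the corresponding Label Cover projection on each promoted coordinate and is the identity on the remaining ones. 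Completeness is immediate: any satisfying labeling of $\mathcal{L}$ extends coordinate-wise to an assignment satisfying every $\pi_{x \to x'}$.

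For soundness, I would argue the contrapositive: any labeling satisfying more than a $1/R^\gamma$ fraction of $\Phi_{l,l'}$ can, by reading a uniformly random coordinate of each variable, be decoded into a randomized Label Cover assignment with expected satisfying fraction $\Omega(1/R^\gamma)$, contradicting the hardness of $\mathcal{L}$ for any $\gamma < \gamma_0$ (after absorbing the coordinate sampling loss into the exponent).

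The main obstacle I expect is the weak-density property. Given $m \geq \lceil 2/\delta \rceil$ layers $l_1 < \cdots < l_m$ and sets $S_i \subseteq X_{l_i}$ each of relative size at least $\delta$, I would run a double-counting argument on ordered pairs $(i, i')$ with $i < i'$: once the marginal distributions induced on individual coordinates are matched to the Label Cover edge distribution, the expected fraction (over a uniformly random such pair) of $\Phi_{l_i, l_{i'}}$-constraints landing in $S_i \times S_{i'}$ is at least $\delta^2$. A Markov-type argument together with $m \geq 2/\delta$ then isolates a specific pair $(i', i'')$ achieving density at least $\delta^2/4$. The main subtlety is verifying that the coupled $U/V$-tuple structure makes these marginals compatible enough for the averaging to be essentially lossless, which is the crux of the weak-density argument and is what dictates the particular way the layers are defined.
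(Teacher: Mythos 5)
The paper does not prove Theorem~\ref{thm-multi}: it explicitly attributes the result to Dinur, Guruswami, Khot and Regev~\cite{DGKR03} (``The following inapproximability of the Multi-Layered PCP was proven by Dinur et al.'') and only restates it for use in Section~\ref{sec:redn}. So there is no in-paper proof to compare against; your proposal is a reconstruction of the DGKR03 argument, and your outline of the construction (layers built from mixed $U/V$-tuples over a parallel-repeated Label Cover, constraints that promote a block of $U$-coordinates to adjacent $V$-coordinates via edges, soundness by decoding a random coordinate) does track DGKR03 at a high level.

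That said, your weak-density argument contains a concrete error. You assert that the expected fraction, over a uniformly random pair $(i,i')$, of $\Phi_{l_i,l_{i'}}$-constraints landing in $S_i\times S_{i'}$ is at least $\delta^2$, and then separately invoke $m\geq 2/\delta$. These two steps are in tension: if the average over pairs were already $\geq\delta^2$, there would be nothing for the hypothesis $m\geq\lceil 2/\delta\rceil$ to do, yet the definition of weak density makes that hypothesis essential. And the $\delta^2$ claim is false in general: under the natural coupling of all layers to a single random $L$-tuple of Label Cover edges $z$ (so that each projection $z_i$ onto layer $l_i$ is uniform on $X_{l_i}$ and each pair $(z_i,z_{i'})$ is a uniform constraint of $\Phi_{l_i,l_{i'}}$), the events $\{z_i\in S_i\}$ overlap on shared coordinates and can be strongly negatively correlated, so $\Pr[z_i\in S_i,\ z_{i'}\in S_{i'}]$ may be far below $\delta^2$ for every pair when $m$ is small. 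The correct argument uses $m\geq 2/\delta$ precisely at this point: restrict WLOG to $m=\lceil 2/\delta\rceil$ layers and set $T=\sum_{i}\mathbbm{1}[z_i\in S_i]$, so that $\mathbb{E}[T]\geq m\delta\geq 2$; since $\binom{t}{2}\geq t-1$ for all integers $t\geq 0$, we get $\mathbb{E}\bigl[\binom{T}{2}\bigr]\geq\mathbb{E}[T]-1\geq m\delta/2$, and averaging over the $\binom{m}{2}$ pairs yields a pair $(i',i'')$ with $\Pr[z_{i'}\in S_{i'},\ z_{i''}\in S_{i''}]\geq \delta/(m-1)\geq\delta^2/2\geq\delta^2/4$. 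This also pins down why the promoted coordinates must form a fixed contiguous block rather than an arbitrary subset of the $U$-coordinates: only then do all the layer projections factor through the same underlying random tuple, which is what licenses the counting.
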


\section{Hardness Reduction for {\sc HypVC-Partite}\label{sec:redn}}
\subsection{Construction of the Hypergraph}
\label{sec:construction}
Fix a $k \geq 3$, an arbitrarily small parameter $\eps > 0$ and
let $r = \lceil 10\eps^{-2}\rceil$. We shall construct a $(k+1)$-uniform
$(k+1)$-partite hypergraph as an instance of $(k+1)$-{\sc HypVC-Partite}.
Our construction will be a reduction from an instance $\Phi$ of the
Multi-Layered PCP with number of layers $L = 32\eps^{-2}$ and
parameter $R$ which shall be chosen later to be large enough. It involves 
creating, for each variable of the PCP,
several copies of the Long Code endowed with different
biased measures as explained below.

Over any domain $T$, a Long Code $\calH$ is a collection of all subsets
of $T$, i.e. $\calH = 2^T$. A bias $p\in [0,1]$ defines a measure
$\mu_p$ on $\calH$ such that $\mu_{p}(v) = 
{p}^{|v|}(1 - p)^{|T\setminus v|}$ for any $v \in \calH$. In our
construction we need several different biased measures defined
as follows.
For all $j = 1, \dots, r$, define $q_j := \frac{2j}{rk}$, and biases 
$p_j :=1
- q_j - \eps$. Each $p_j$ defines a biased measure $\mu_{p_j}$ over
a Long Code over any domain. Next,
we define the vertices of the hypergraph. 

\smallskip
{\bf Vertices.} We shall denote the set of vertices by $V$.
Consider a variable $x$ in the layer $X_l$ of the PCP.
For $i \in [k+1]$ and $j \in [r]$, let $\calH^x_{ij}$ be a
Long Code on the domain $R_{X_l}$ endowed with the bias $\mu_{p_j}$,
i.e.  $\mu_{p_j}(v) = {p_j}^{|v|}(1 - p_j)^{|R_{X_l}\setminus v|}$ for
all $v \in \calH^x_{ij} = 2^{R_{X_l}}$.  
The set of vertices corresponding to $x$ is $V[x] := 
\bigcup_{i=1}^{k+1} \bigcup_{j=1}^r \calH^x_{ij}$.
We define the weights on vertices to be proportional to its biased
measure in the corresponding Long Code. Formally, for any $v \in
\calH^x_{ij}$,
\begin{equation}
{\rm wt}(v) := \frac{\mu_{p_j}(v)}{L|X_l|r(k+1)}.
\end{equation}
The above conveniently ensures that for any $l \in [L]$,\
$\sum_{x \in X_l}{\rm wt}(V[x]) = 1/L$, and  
$\sum_{l \in [L]}\sum_{x \in X_l}{\rm wt}(V[x]) = 1$.
In addition to the vertices for each variable of the PCP, the instance
also contains $k+1$ \emph{dummy} vertices $d_1, \dots, d_{k+1}$ each
with a very large weight given by ${\rm wt}(d_i) := 2$ for $i\in[k+1]$. 
Clearly, this ensures that the total weight of all the vertices
in the hypergraph is $2(k+1) + 1$.  
As we shall see later, the edges shall be defined in such a way
that along with these weights would ensure that 
the maximum sized independent set shall contain all the dummy vertices. 
Before
defining the edges we define the $(k+1)$ partition $(V_1,\dots,
V_{k+1})$ of $V$ to be:
\begin{equation}
V_i = \left(\bigcup_{l=1}^L\bigcup_{x\in
X_l}\bigcup_{j=1}^r\calH^x_{ij}\right)\cup \{d_i\},
\end{equation}
for all $i=1,\dots, k+1$.
We now define
the hyperedges of the instance. In the rest of the section, the
vertices shall be thought of as subsets of their respective domains.

\smallskip
{\bf Hyperedges.} For every pair of variables $x$ and $y$ of the PCP 
such that there is a constraint $\pi_{x\rightarrow y}$, 
we construct edges as follows. 

\smallskip
(1.) Consider all permutations $\sigma : [k+1]\mapsto [k+1]$ and 
 sequences $(j_1,\dots, j_k, j_{k+1})$ such that, 
$j_1, \dots, j_k \in [r]\cup\{0\}$ and $j_{k+1} \in [r]$ such that:
$\sum_{i=1}^k\mathbbm{1}_{\{j_i \neq 0\}} \ q_{j_i} \geq 1$.

\smallskip
(2.) Add all possible hyperedges $e$ such that for all $i \in [k]$: 

\hspace{0.5cm}(2.a) If $j_i \neq 0$ then $e \cap V_{\sigma(i)} =: v_{\sigma(i)}
\in \calH^x_{\sigma(i), j_i}$, and,

\hspace{0.5cm}(2.b) If $j_i = 0$ then $e \cap V_{\sigma(i)} = d_{\sigma(i)}$
and,

\hspace{0.5cm}(2.c) $e \cap V_{\sigma(k+1)} =: u_{\sigma(k+1)}\in
\calH^y_{\sigma(k+1), j_{k+1}}$,

which satisfy,
\begin{equation} 
\pi_{x\rightarrow y} \left(\bigcap_{\substack{i:~i\in [k] \\ \ \
j_i\neq 0}}
v_{\sigma(i)} \right) \cap u_{\sigma(k+1)} = \emptyset. 
\label{eqn-intersect-empty}
\end{equation}
Let us denote the hypergraph constructed above by
$G(\Phi)$. From the construction it is clear the $G(\Phi)$ is $(k+1)$-partite
with partition $V = \cup_{i \in [k+1]}V_i$.

Note that the edges are defined in such a way that the set $\{d_1,
\dots, d_{k+1}\}$ is an independent set in the hypergraph. Moreover, since
the weight of each dummy vertex $d_i$ is $2$, while total weight of
all except the dummy vertices is $1$, this implies that any maximum
independent set $\calI$ contains all the dummy vertices. Thus,
$V\setminus \calI$ is a minimum vertex cover that does not 
contain any dummy vertices.
For convenience, the
analysis of our reduction, presented in the rest of this section,  
shall focus on the weight of 
$(\calI\cap V)\setminus \{d_1, \dots, d_{k+1}\}$. 

\subsection{Completeness}
In the completeness case, the instance $\Phi$ is a YES instance i.e.
there is a labeling $A$ which maps each variable $x$ in layer $X_l$ to
an assignment in $R_{X_l}$ for all $l=1, \dots, L$, such that all the
constraints of $\Phi$ are satisfied.

\noindent
Consider the set of vertices $\calI^*$ which satisfies the following
properties:\\
(1) $d_i \in \calI^*$ for all $i=1, \dots, k+1$.\\
(2) For all $l \in [L]$, $x \in X_l$, $i \in [k+1], j \in [r]$,
\begin{equation}
\calI^*\cap\calH^x_{ij} = \{v \in \calH^x_{ij} : A(x) \in v\}. 
\label{eqn-complete-IS}
\end{equation}
Suppose $x$ and $y$ are two variables in $\Phi$ with a constraint
$\pi_{x\rightarrow y}$ between them. Consider any $v \in \calI^*\cap
V[x]$ and $u \in \calI^*\cap V[y]$. The above construction of $\calI^*$
along with the fact that the labeling $A$ satisfies the constraint
$\pi_{x\rightarrow y}$
implies that $A(x) \in v$ and $A(y)\in u$ and
$A(y) \in \pi_{x\rightarrow y}(v) \cap u$. Therefore, Equation 
\eqref{eqn-intersect-empty} of the 
construction is not satisfied by the vertices in $\calI^*$, and so 
$\calI^*$ is an independent set in the hypergraph.
By Equation \eqref{eqn-complete-IS}, 
the fraction of the weight of the Long Code $\calH^x_{ij}$ which
lies in $\calI^*$ is $p_j$, for any variable $x$, $i\in [k+1]$ 
and $j\in [r]$. Therefore,
\begin{equation}
\frac{{\rm wt}(\calI^*\cap V[x])}{{\rm wt}(V[x])} =
\frac{1}{r}\sum_{j=1}^r p_j  = 1 - \frac{1}{k}\left(1 +
\frac{1}{r}\right) - \eps,
\end{equation}
by our setting of $p_j$ in Section \ref{sec:construction}.
The above yields that 
\begin{equation}
{\rm wt}\left(\calI^* \cap (V\setminus \{d_1,\dots,
d_{k+1}\})\right) = 1 - \frac{1}{k}\left(1 +
\frac{1}{r}\right) - \eps \geq 1 - \frac{1}{k} - 2\eps, 
\label{eq-completeness}
\end{equation}
for a small enough value of $\eps > 0$ and our setting of the parameter
$r$.

\subsection{Soundness}
For the soundness analysis we have that $\Phi$ is a NO instance 
as given in Theorem \ref{thm-multi} and we wish to prove that the size
of the maximum independent set in $G(\Phi)$ is appropriately 
small. 
For a contradiction, we assume that there is a maximum
independent set $\calI$ in $G(\Phi)$ such that,
\begin{equation}
{\rm wt}(\calI \cap (V\setminus \{d_1,\dots,
d_{k+1}\})) \geq 1 - \frac{k}{2(k+1)} + \eps.
\end{equation}
Define the set of variables $X'$ to be as follows:
\begin{equation}
X' := \left\{x\textnormal{ a variable in }\Phi : \frac{{\rm
wt}(\calI\cap V[x])}{{\rm wt}(V[x])} \geq 1 - \frac{k}{2(k+1)} +
\frac{\eps}{2}\right\}. \label{eq-defX'}
\end{equation}
An averaging argument shows that ${\rm wt}(\cup_{x\in X'}V[x]) \geq
\eps/2$. A further averaging implies that there are $\frac{\eps}{4}L =
\frac{8}{\eps}$ layers of $\Phi$ such that $\frac{\eps}{4}$ fraction
of the variables in each of these layers belong to $X'$. Applying the
Weak Density property of $\Phi$ given by Definition \ref{def-weakly-dense}
and Theorem \ref{thm-multi}
yields two layers $X_{l'}$ and $X_{l''}$  ($l' < l''$) such that
$\frac{\eps^2}{64}$ fraction of the constraints between them are
between variables in $X'$. The rest of the analysis shall focus on
these two layers and for convenience we shall denote $X'\cap X_{l'}$ by $X$ 
and $X' \cap X_{l''}$ by $Y$, and denote the respective label sets by
$R_X$ and $R_Y$.

Consider any variable $x \in X$. For any $i\in[k+1], j\in[r]$,
call a Long Code $\calH^x_{ij}$ \emph{significant} if 
$\mu_{p_j}(\calI\cap \calH^x_{ij}) \geq
\frac{\eps}{2}$. From Equation
\eqref{eq-defX'} and an averaging argument we obtain that,
\begin{equation}
\left|\{(i,j) \in [k+1]\times[r] :
\calH^x_{ij}\textnormal{  is \emph{significant}.}\}\right|
\geq \left(1 - \frac{k}{2(k+1)}\right)(r(k+1)) = \frac{rk}{2} +r. 
\label{eq-significant-bd}
\end{equation}
Using an
analogous argument we obtain a similar statement for every variable $y
\in Y$ and corresponding Long Codes $\calH^y_{ij}$.
The following structural lemma follows from the above bound.
\begin{Lem} \label{lem-exist-good-sequence}
Consider any variable $x \in X$. Then there exists a sequence $(j_1,
\dots, j_{k+1})$ with $j_i \in [r]\cup\{0\}$ for $i\in [k+1]$; 
such that the Long Codes
$\{\calH^x_{i, j_i} \mid i\in [k+1] \textnormal{ where }j_i \neq 0\}$, 
are all \emph{significant}. Moreover,
\begin{equation}
\sum_{i=1}^{k+1}j_i \geq \frac{rk}{2} + r\ .
\end{equation}
\end{Lem}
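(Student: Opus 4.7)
The plan is to build the sequence $(j_1,\dots,j_{k+1})$ greedily and coordinate-by-coordinate from the set of significant Long Codes counted by \eqref{eq-significant-bd}. Fix $x \in X$ and, for each $i \in [k+1]$, let
$$S_i \;:=\; \{\, j \in [r] \;:\; \calH^x_{ij} \text{ is significant} \,\}.$$
I would then set $j_i := \max(S_i)$ whenever $S_i \neq \emptyset$, and $j_i := 0$ otherwise. Under this definition, whenever $j_i \neq 0$ the index $j_i$ lies in $S_i$ by construction, so $\calH^x_{i,j_i}$ is significant, which is the first requirement of the lemma.

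For the sum bound, the key elementary fact is that any nonempty subset $S \subseteq [r]$ satisfies $\max(S) \geq |S|$, since $S$ consists of $|S|$ distinct elements of $\{1,\dots,r\}$ and its largest element is at least its cardinality. When $S_i = \emptyset$ our convention gives $j_i = 0 = |S_i|$, so in every case $j_i \geq |S_i|$. Summing over $i$ and invoking \eqref{eq-significant-bd}, which holds precisely because $x \in X \subseteq X'$, I obtain
$$\sum_{i=1}^{k+1} j_i \;\geq\; \sum_{i=1}^{k+1} |S_i| \;=\; \left|\{(i,j)\in[k+1]\times[r] : \calH^x_{ij} \text{ is significant}\}\right| \;\geq\; \frac{rk}{2} + r,$$
which is exactly the desired inequality.

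There is no real obstacle here, since the bulk of the work is already packed into the counting estimate \eqref{eq-significant-bd}; the lemma is essentially a repackaging of that estimate. The only conceptual point worth highlighting is the choice to take the \emph{maximum} element of each $S_i$ rather than an arbitrary one: this is what converts a lower bound on the number of significant pairs (a statement about cardinalities of the $S_i$) into a lower bound on the sum $\sum_i j_i$ (a statement about the values themselves), via the trivial inequality $\max(S) \geq |S|$ for $S \subseteq [r]$.
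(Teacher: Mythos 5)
Your proposal is correct and matches the paper's proof essentially verbatim: both take $j_i$ to be the largest index of a significant Long Code in column $i$ (or $0$ if none), and both use the elementary fact $\max(S_i) \ge |S_i|$ for $S_i \subseteq [r]$ to convert the count from Equation \eqref{eq-significant-bd} into a bound on $\sum_i j_i$. Your write-up simply makes that inequality explicit where the paper says ``it is easy to see.''
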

\begin{proof}
For all $i \in [k+1]$ choose $j_i$ as follows: if none of the Long
Codes $\calH^x_{ij}$ for $j\in[r]$ are \emph{significant} then let $j_i
:= 0$, otherwise let $j_i := \max \{j \in
[r] : \calH^x_{ij}\textnormal{ is \emph{significant}}\}$. It is easy
to see that $j_i$ is an upper bound on the number of significant Long
Codes of the form $\calH^x_{ij}$. Therefore,
\begin{equation}
\sum_{i=1}^{k+1}j_i \ \geq \ \left|\{(i,j) \in [k+1]\times[r] :
\calH^x_{ij}\textnormal{  is \emph{significant}.}\}\right|
\ \geq \ \frac{rk}{2} + r \ \ \ \ \textnormal{(From Equation
\eqref{eq-significant-bd})} \label{eq-sequence-bd} 
\end{equation}
which proves the lemma.
\end{proof}
Next we define the decoding procedure to define a label for any
given variable $x \in X$.

\subsubsection{Labeling for variable $x \in X$} 
The label $A(x)$ for each variable $x
\in X$ is chosen independently via
the following three step (randomized) procedure. 

\smallskip
Step 1. Choose a sequence $(j_1, \dots,j_{k+1})$ yielded by
Lemma \ref{lem-exist-good-sequence} applied to $x$.
 
\smallskip
Step 2. Choose an element $i_0$ uniformly at random from $[k+1]$.

\smallskip
Before describing the third step of the procedure we require the
following lemma.
\begin{Lem}\label{lem-vertices-choosing}There exist vertices $v_i \in
\calI\cap\calH^x_{ij_i}$ for every $i : i \in [k+1]\setminus\{i_0\}, j_i
\neq 0$, and an integer $t := t(\eps)$ satisfying:
\begin{equation}
\left|\bigcap_{\substack{i : i \in [k+1]\setminus\{i_0\}, \\ j_i \neq
0}}v_i\right| < t.\label{eq-vertices-intersection}
\end{equation}
\end{Lem}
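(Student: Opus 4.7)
The plan is a proof by contradiction, reducing directly to Lemma \ref{lem:intersecting-measure}. For each $i \in [k+1]\setminus\{i_0\}$ with $j_i \neq 0$, set $\calF_i := \calI \cap \calH^x_{i,j_i}$, viewed as a family of subsets of the label set $R_X$. Since $\calH^x_{i,j_i}$ is significant, $\mu_{p_{j_i}}(\calF_i) \geq \eps/2 > 0$, so each such $\calF_i$ is nonempty and in particular admits at least one candidate vertex $v_i$. Let $t = t(\eps)$ be the value furnished by Lemma \ref{lem:intersecting-measure} for error parameter $\eps/4$ and slack $\delta := \eps$. Suppose, toward a contradiction, that no choice of candidates $v_i$ yields an intersection strictly smaller than $t$. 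Then by definition the families $\{\calF_i\}$ form a cross-intersecting collection over $R_X$ with parameter $t$.

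To invoke Lemma \ref{lem:intersecting-measure}, I next verify the weight hypothesis $\sum_{i\neq i_0,\, j_i \neq 0} q_{j_i} \geq 1$. By Lemma \ref{lem-exist-good-sequence}, $\sum_{i=1}^{k+1} j_i \geq rk/2 + r$; since $j_{i_0} \leq r$, this gives $\sum_{i \neq i_0} j_i \geq rk/2$, and recalling $q_j = 2j/(rk)$ translates exactly to $\sum_{i\neq i_0,\, j_i \neq 0} q_{j_i} \geq 1$. Lemma \ref{lem:intersecting-measure} then produces an index $i^* \in [k+1]\setminus\{i_0\}$ with $j_{i^*} \neq 0$ such that $\mu_{1-q_{j_{i^*}}-\eps}(\calF_{i^*}) < \eps/4$. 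By the choice of biases $p_j := 1 - q_j - \eps$ from Section \ref{sec:construction}, the left-hand side is precisely $\mu_{p_{j_{i^*}}}(\calF_{i^*})$, so $\mu_{p_{j_{i^*}}}(\calF_{i^*}) < \eps/4$, contradicting the significance threshold $\eps/2$.

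The only delicate point is parameter bookkeeping: the biases need to be aligned so that the shift $\delta := \eps$ in Lemma \ref{lem:intersecting-measure} reproduces exactly the bias $p_j$ used to weight the vertices, and this is precisely why the biases were engineered as $p_j = 1 - q_j - \eps$ at construction time. All of the genuine combinatorial work has already been carried out in Section \ref{sec:intersecting}, so I do not anticipate any further obstacle beyond this chase. The resulting $t = t(\eps)$ depends polynomially on $1/\eps$ alone, independent of $R$ and of the hypergraph, which is exactly what the remaining soundness analysis will need when turning the vertex $v_i$'s into a label for $x$.
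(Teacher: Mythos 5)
Your proof is correct and matches the paper's argument: both apply Lemma~\ref{lem:intersecting-measure} with $\delta=\eps$ to the families $\calI\cap\calH^x_{ij_i}$ (you correctly name these, where the paper's proof has a small typo writing $\calH^x_{ij_i}$), after verifying $\sum_{i\neq i_0,\,j_i\neq 0}q_{j_i}\geq 1$ exactly as in Equation~\eqref{eq-sj-bd}. Your write-up is simply a more explicit version of the same contrapositive, with the parameter choice $p_j = 1-q_j-\eps$ playing the same role.
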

\begin{proof} Since $j_{i_0} \leq r$ it is easy to see,
\begin{equation}
\sum_{i\in [k+1]\setminus\{i_0\}} j_i \geq \frac{rk}{2} 
\ \Rightarrow \ \sum_{\substack{i : i\in[k+1]\setminus\{i_0\}, \\ j_i
\neq 0}} q_{j_i} \geq 1. \label{eq-sj-bd}
\end{equation}
Moreover, since the sequence $(j_1, \dots, j_{k+1})$ was obtained by
Lemma \ref{lem-exist-good-sequence} applied to $x$, we know that
$\mu_{p_{j_i}}(\calI\cap\calH^x_{ij_i}) \geq \frac{\eps}{2}$,
$\forall i : i\in[k+1]\setminus\{i_0\}, j_i\neq 0$.
Combining this with Equation \eqref{eq-sj-bd} and
Lemma \ref{lem:intersecting-measure} we
obtain that for some integer $t := t(\eps)$ the collection
 of set families
$\{\calH^x_{ij_i} :  i\in[k+1]\setminus\{i_0\}, j_i\neq 0\}$ is not
$k'$-wise $t$-cross-intersecting, where $k' = |\{i\in[k+1]\setminus\{i_0\} 
:j_i\neq 0\}|$. 
This proves the lemma.
\end{proof}
The third step of the labeling procedure is as follows:

\smallskip
Step 3. Apply Lemma \ref{lem-vertices-choosing} to obtain the 
the vertices $v_i \in \calI\cap\calH^x_{ij_i}$ for every 
$i : i \in [k+1]\setminus\{i_0\}, j_i \neq 0$ satisfying Equation
\eqref{eq-vertices-intersection}. Define $B(x)$ as,
\begin{equation}
B(x) := \bigcap_{\substack{i : i \in [k+1]\setminus\{i_0\}, \\ j_i \neq
0}}v_i, \label{eq-Bx}
\end{equation}
noting that $|B(x)| < t$. Assign a random label from $B(x)$ to the
variable $x$ and call the assigned label $A(x)$.

\subsubsection{Labeling for variable $y \in Y$}
After labeling the variables $x \in X$ via the procedure above, 
we construct a labeling  $A(y)$ for any variable 
$y \in Y$ by defining,
\begin{equation}
A(y) := \textnormal{argmax}_{a \in R_Y}\left|\{x \in X\cap N(y)\ \mid\
a \in \pi_{x\rightarrow y}(B(x))\}\right|,
\end{equation}
where $N(y)$ is the set of all variables that have a constraint with
$y$. The above process selects a label for $y$ which lies in maximum
number of projections of $B(x)$ for variables $x \in X$ which have a
constraint with $y$.

The rest of this section is devoted to lower bounding the number of
constraints satisfied by the labeling process, and thus obtain a
contradiction to the fact that $\Phi$ is a NO instance.

\subsubsection{Lower bounding the number of satisfied constraints}
Fix a variable $y \in Y$. Let $U(y) := X\cap N(y)$, i.e. the variables
in $X$ which have a constraint with $y$.
Further, define the set $P(y) \subseteq [k+1]$ as follows,
\begin{equation}
P(y) = \{i \in [k+1]\ \mid\ \exists j\in[r] \textnormal{ such that }
\mu_{p_j}(\calI\cap \calH^y_{ij}) \geq \eps/2\}.
\end{equation}
In other words, $P(y)$ is the set of all those 
indices in $[k+1]$ such that
there is a \emph{significant} Long Code corresponding to each of them.
Applying Equation \eqref{eq-significant-bd} to $y$ we obtain that
there at least $\frac{r(k+2)}{2}$ \emph{significant} Long Codes
corresponding to $y$, and therefore $|P(y)| \geq \frac{k+1}{2} \geq 1$. 
Next
we define subsets of $U(y)$ depending on the outcome of Step 2 in the
labeling procedure for variables $x \in U(y)$.
For $i \in [k+1]$ define, 
\begin{equation}
U(i, y) := \{x \in U(y)\ \mid\ i\textnormal{ was chosen in Step 2 of
the labeling procedure for } x\},
\end{equation}
and,
\begin{equation}
U^*(y) := \bigcup_{i \in P(y)} U(i, y).
\end{equation}
Note that $\{U(i, y)\}_{i\in[k+1]}$ is a partition of $U(y)$. Also,
since $|P(y)| \geq \frac{k+1}{2}$ and the labeling procedure for each
variable $x$ chooses the index in Step 2 uniformly and independently
at random we have,
\begin{equation}
\mathbb{E}[|U^*(y)|] \geq \frac{|U(y)|}{2}, \label{eq-expect-U}
\end{equation} 
where the expectation is over the random choice of the indices in Step
2 of the labeling procedure for all $x \in U(y)$. Before continuing we
need the following simple lemma (proved as Claim 5.4  in
\cite{DGKR03}). 
\begin{Lem}\label{lem-disjoint}
Let $A_1,\dots, A_N$ be a collection of $N$ sets, each of size at most
$T\geq 1$. If there are not more than $D$ pairwise disjoint sets in
the collection, then there is an element that is contained in at least
$\frac{N}{TD}$ sets.
\end{Lem}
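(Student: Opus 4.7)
The plan is to use a greedy maximal disjoint collection together with a pigeonhole argument, which is the standard route for this kind of covering statement. First I would extract a maximal pairwise disjoint subcollection from $A_1, \ldots, A_N$, say $A_{i_1}, \ldots, A_{i_d}$. By hypothesis, no more than $D$ sets in the collection can be pairwise disjoint, so $d \le D$. Let $U = A_{i_1} \cup \cdots \cup A_{i_d}$, and observe that since each $|A_{i_s}| \le T$, we have $|U| \le dT \le DT$.

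Next I would exploit maximality: any set $A_j$ in the original collection must intersect $U$, for otherwise $A_j$ could be appended to the disjoint subcollection, contradicting its maximality. Thus every one of the $N$ sets contributes at least one element to $U$ when we count the incidence pairs $\{(j,u) : u \in A_j \cap U\}$. This gives at least $N$ such incidences distributed over at most $|U| \le DT$ elements of $U$, so by pigeonhole there is some element of $U$ contained in at least $N/(TD)$ of the sets $A_j$, as required.

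I do not expect a genuine obstacle here; the only small care-point is handling the degenerate case where $N \le D$ trivially (take any maximal disjoint subcollection and the bound $N/(TD) \le 1$ is automatic from the existence of any element in any $A_j$, since $T \ge 1$). The rest is a clean two-line greedy-plus-pigeonhole argument and needs no additional machinery beyond the hypotheses of the lemma.
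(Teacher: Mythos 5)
Your argument is correct and is the standard proof of this fact: take a maximal pairwise disjoint subcollection (size at most $D$ by hypothesis), note its union $U$ has at most $DT$ elements, use maximality to conclude every $A_j$ meets $U$, and apply pigeonhole to the at least $N$ incidences between the $A_j$'s and $U$. The paper does not reprove the lemma — it cites it as Claim 5.4 of \cite{DGKR03} — but the proof there is exactly this greedy-plus-pigeonhole argument, so you have reconstructed the intended proof. One minor remark: your aside about the ``degenerate case $N \le D$'' is unnecessary (the main argument already covers it whenever some $A_j$ is nonempty), and the lemma implicitly assumes the $A_j$ are nonempty, which holds in the paper's application since each $B(x)$ must be nonempty for a label to be drawn from it.
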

Now consider any $i' \in P(y)$ such that $U(i', y)\neq \emptyset$ 
and a variable $x \in
U(i', y)$. Since $i' \in
P(y)$ there is a \emph{significant} Long Code $\calH^y_{i'j'}$ for some
$j' \in [r]$. Furthermore, since $\calI$ is an independent set there 
cannot be a $u
\in \calI\cap\calH^y_{i',j'}$ such that $\pi_{x\rightarrow y}(B(x))\cap u 
= \emptyset$,
otherwise the following set of $k+1$ vertices, 
$$\{v_i\ \mid\ i\in[k+1]\setminus\{i'\}, j_i\neq 0\} \cup \{d_i \ \mid\  
i\in[k+1]\setminus\{i'\}, j_i = 0\}\cup\{u\}$$
form an edge in $\calI$, where $v_i, j_i$ ($i \in [k+1]$) are as
constructed in the labeling procedure for $x$.

Consider the collection of 
sets $\pi_{x\rightarrow y}(B(x))$ for all $x \in U(i',
y)$. Clearly each set is of size less than $t$. 
Let $D$ be the maximum number of disjoint sets in this collection.
Each disjoint set independently reduces the measure of
$\calI\cap\calH^y_{i',j'}$ by a factor of $(1 - (1 - p_{j'})^t)$.
However, since $\mu_{p_{j'}}(\calI\cap\calH^y_{i',j'})$ is at least
$\frac{\eps}{2}$, this implies that $D$ is at most
$\log(\frac{\eps}{2})/\log(1-(2/rk)^t)$, since $p_{j'} \leq 1 -
\frac{2}{rk}$. Moreover, since $t$ and $r$ depends only on $\eps$,  
the upper bound on $D$ also depends only on $\eps$.

Therefore by Lemma \ref{lem-disjoint}, 
there is an element $a \in R_Y$ such that $a \in
\pi_{x\rightarrow y}(B(x))$ for at least $\frac{1}{Dt}$ fraction of $x
\in U(i',y)$. Noting that this bound is independent of $j'$ and that
$\{U(i',y)\}_{i' \in P(y)}$ is a partition of $U^*(y)$, we obtain that
there is an element $a \in R_Y$ such that $a \in \pi_{x\rightarrow
y}(B(x))$ for $\frac{1}{(k+1)Dt}$ fraction of $x \in U^*(y)$.
Therefore, in Step 3 of the labeling procedure
when a label $A(x)$ is chosen uniformly at random from
$B(x)$, in exception, $a = \pi_{x\rightarrow y}((A(x))$ for
$\frac{1}{(k+1)Dt^2}$ fraction of $x \in U^*(y)$. Combining this with
Equation \eqref{eq-expect-U} gives us that there is a labeling to the
variables in $X$ and $Y$ which satisfies $\frac{1}{2(k+1)Dt^2}$ fraction
of the constraints between variables in $X$ and $Y$ which is in turn
at least $\frac{\eps^2}{64}$ fraction of the constraints between the
layers $X_{l'}$ and $X_{l''}$. Since $D$ and $t$ depend only on $\eps$,
choosing the parameter $R$ of $\Phi$ to be large enough we obtain a
contradiction to our supposition on the lower bound on the size of the
independent set. Therefore in the Soundness case, any for any
independent set $\calI$, 
$${\rm
wt}(\calI\cap(V\setminus\{d_1,\dots,d_{k+1}\})) \leq 1 -
\frac{k}{2(k+1)}
+ \eps.$$
Combining the above with Equation \eqref{eq-completeness} of 
the analysis in the Completeness case yields
a factor $\frac{k^2}{2(k+1)} - \delta$ (for any $\delta > 0$) 
hardness for approximating $(k+1)$-{\sc Hyp\-VC\--Par\-tite} .

Thus, we obtain a factor $\frac{k}{2} - 1 + \frac{1}{2k} -\delta$
hardness for approximating \hvcpartite.


\bibliographystyle{plain}
\bibliography{refs-vc}

\appendix
\section{LP Integrality Gap for \hvcpartite}\label{sec:aharoni}
\label{sec-aharoni}
This section describes the $\frac{k}{2} - o(1)$ 
integrality gap construction of Aharoni
\etal \cite{AHK96} for the standard LP relaxation for \hvcpartite. 
The hypergraph that is constructed is
unweighted.

Let $r$ be a (large) positive integer. The vertex set $V$ of the hypergraph 
is partitioned into subsets $V_1, \dots, V_k$ where, for all $i=
1,\dots, k$,
\begin{equation}
V_i = \{x_{ij}\ \mid\ j = 1,\dots, r\}\cup\{y_{il}\ \mid\ l = 1,\dots,
rk+1\}.
\end{equation}
Before we define the hyperedges, for convenience we shall define the
LP solution.
The LP values of the vertices are as given by the function $h : V
\mapsto [0,1]$ as follows: for all $i = 1, \dots, k$,
\begin{eqnarray}
h(x_{ij}) = \frac{2j}{rk}, && \forall j =1, \dots, r  \nonumber \\
h(y_{il}) = 0, && \forall l =1, \dots, rk+1.\nonumber
\end{eqnarray}
The set of hyperedges is naturally defined to be the set of all possible 
hyperedges, choosing exactly one vertex from each $V_i$ such that the
sum of the LP values of the corresponding vertices is at least $1$.
Formally,
\begin{equation}
E = \{e \subseteq V\ \mid\ \forall i\in [k],\ |e\cap V_i| =1\textnormal{
and }\sum_{v \in e} h(v) \geq 1\}.
\end{equation} 
Clearly the graph is $k$-uniform and $k$-partite with 
$\{V_i\}_{i\in[k]}$ being the $k$-partition of $V$.

The value of the LP solution is 
\begin{equation}
\sum_{v\in V}h(v) = k\sum_{j\in [r]}\frac{2j}{rk} = r + 1.
\end{equation}

Now let $V'$ be a minimum vertex cover in the hypergraph.
To lower bound the size of the minimum vertex cover, we first note
that the set $\{v \in V\ \mid\ h(v) > 0\}$ is a vertex cover of size
$rk$, and therefore $|V'| \leq rk$. 
Also, for any $i \in [k]$ the vertices $\{y_{il}\}_{l\in[rk+1]}$
have the same neighborhood. Therefore, we can assume that $V'$ 
has no vertex $y_{il}$, otherwise it will
contain at least $rk+1$ such vertices. 

For all $i \in [k]$ let define indices $j_i\in [r]\cup\{0\}$ as
follows:
\begin{equation}
j_i = \begin{cases}
	0\textnormal{ \ \ \ if: } \forall j\in[r],\ x_{ij}\in V', \\
	\max\ \{j\in[r]\ \mid\ x_{ij}\not\in V'\}\textnormal{ \
otherwise}. 
\end{cases}
\end{equation}
It is easy to see that since $V'$ is a vertex cover, 
$$\sum_{i\in [k]}h(x_{ij_i}) < 1,$$
which implies,
$$\sum_{i\in[k]}j_i < \frac{rk}{2}.$$
Also, the size of $V'$ is lower bounded by $\sum_{i\in[k]}(r - j_i)$.
Therefore,
\begin{equation}
|V'| \geq  \sum_{i\in[k]}(r - j_i) \geq rk - \sum_{i\in[k]}j_i \geq rk
- \frac{rk}{2} = \frac{rk}{2}.
\end{equation}
The above combined with the value of the LP solution yields an
integrality gap of $\frac{rk}{2(r+1)} \geq \frac{k}{2} - o(1)$ for
large enough $r$.

\end{document}